\documentclass[11pt, onecolumn, draftcls]{IEEEtran}

\IEEEoverridecommandlockouts 
\usepackage{xcolor}
\usepackage{float}
\usepackage{graphicx}
\usepackage{subfigure}
\usepackage{color}
\usepackage{epsfig}
\usepackage{amssymb}
\usepackage{amsmath}
\usepackage{amsthm}
\usepackage{latexsym}
\usepackage{setspace}
\usepackage{bbm}
\usepackage{dsfont}
\usepackage{xspace}
\usepackage[ruled,vlined]{algorithm2e}
\usepackage[normalem]{ulem}

\usepackage[symbol]{footmisc}

\newcommand{\bPr}[1]{{\mathbb{P}}\left(#1\right)}
\newcommand{\bPPr}[2]{{\mathbb{P}}_{#1}\left(#2\right)}

\newcommand{\bP}[2]{\mathrm{P}_{#1}\left({#2}\right)}

\newcommand{\bE}[2]{{\mathbb{E}}_{#1}\left\{{#2}\right\}}
\newcommand{\bEE}[1]{{\mathbb{E}}\left[{#1}\right]}

\newcommand{\cA}{{\mathcal A}}
\newcommand{\cB}{{\mathcal B}}

\newcommand{\cG}{{\mathcal G}}
\newcommand{\cH}{{\mathcal H}}

\newcommand{\cN}{{\mathcal N}}
\newcommand{\mN}{{\mathbb N}}

\newcommand{\cP}{{\mathcal P}}

\newcommand{\mR}{{\mathbb R}}

\newcommand{\cX}{{\mathcal X}}
\newcommand{\cY}{{\mathcal Y}}

\newcommand{\by}{\mathbf{y}}

\newcommand{\ep}{\varepsilon}

\usepackage{color}


\newtheorem{theorem}{Theorem}

\newtheorem{corollary}[theorem]{Corollary}
\newtheorem*{corollary*}{Corollary}

\newtheorem{lemma}[theorem]{Lemma}
\newtheorem*{lemma*}{Lemmas}

\theoremstyle{remark}
\newtheorem*{remark*}{Remark}
\newtheorem*{remarks*}{Remarks}

\theoremstyle{definition}
\newtheorem{definition}{Definition}


\newcommand{\indicator}{{\mathds{1}}}




\allowdisplaybreaks

\title {Communication Complexity of Distributed High Dimensional Correlation
  Testing}

\author{{K.~R.~Sahasranand}\,\, \and\,\,
  {Himanshu Tyagi}}

\date{} 

\begin{document}
\maketitle

{\renewcommand{\thefootnote}{}\footnotetext{
\noindent The authors are with the Department of Electrical Communication Engineering,
Indian Institute of Science, Bangalore 560012, India.  Email:
\{sahasranand, htyagi\}@iisc.ac.in

  A preliminary  version of this paper~\cite{krs-ht-isit18} was presented at the IEEE International
  Symposium on Information Theory, Vail, USA, 2018.
  }
}
\renewcommand{\thefootnote}{\arabic{footnote}}
\setcounter{footnote}{0}

\begin{abstract}
Two parties observe independent copies of a $d$-dimensional vector and
a scalar. They seek to test if their data is correlated or not, namely
they seek to test if the norm $\|\rho\|_2$
of the correlation vector $\rho$ between their observations exceeds
  $\tau$ or is it $0$. To that end, they communicate interactively and declare the
output of the test. We show that roughly order $d/\tau^2$ bits of
communication are sufficient and necessary for resolving the
distributed correlation testing problem above. Furthermore, we establish
a lower bound of roughly $d^2/\tau^2$ bits for communication needed for
distributed correlation estimation, rendering the estimate-and-test approach
suboptimal in communication required for distributed correlation testing. For the one-dimensional case with
one-way communication, our bounds are tight even in the constant and
provide a precise dependence of communication complexity on the probabilities of error of two types. 
\end{abstract}

\section{Introduction}
Parties $\cP_1$ and $\cP_2$ observe jointly Gaussian random variables $X^n$ and $Y^n$, respectively, comprising independent and identically distributed (i.i.d.) samples $(X_t, Y_t)$, $1\leq t \leq n$, with $X_t\in \mR^d$,
$Y_t\in \mR$, and such that $\bEE{Y_1\mid X_1}=\rho^TX_1$. They
communicate with each other to determine if their observations are
correlated, i.e., to test if $\|\rho\|_2\geq \tau$ or $\|\rho\|_2=0$. 
For a given probability of error requirement and an arbitrary large $n$, what is the minimum
communication needed between the parties?

Note that we have chosen the distribution to be Gaussian just for convenience. Since we allow the number of samples
to be arbitrarily large, even when $X$ and $Y$ are not Gaussian, we can replace subset of samples with their sample
means and use the central limit theorem (Berry-Esseen approximation) to do similar calculations as those presented in this paper.
Indeed, all the results of this paper extend to the case when $X_t$ and $Y_t$ are distributed uniformly over
$\{-1,1\}^d$ and  $\{-1,1\}$, respectively, and  $\mathbb{E}[Y_1|X_1] = \rho^T X_1$.
In
another direction, it is seemingly restrictive to consider the conditional expectation to be a linear function of the observation vector $X_1$.
However, as in nonparametric regression, we can express $E[Y_1|X_1=x_1]$ over an orthonormal basis for the $L_2$ space and replace the coordinates of $x_1$
with the evaluation of $f_i(x_1)$ for basis functions $f_i$. The dimension $d$ is chosen sufficiently large to capture most of the energy.
In summary, our seemingly restrictive setup can be easily enhanced to the non-Gaussian setting with nonlinear conditional expectation by using appropriate kernels (basis functions), as in nonparametric regression. In fact, this is the reason we call this problem correlation testing, instead of just independence testing.

This problem is an instance of a distributed hypothesis testing
problem, which has been studied for several decades in the information
theory literature starting with the seminal work~\cite{AhlCsi86} and
closely followed by~\cite{Han87}.
Most formulations in this literature focus on the tradeoff between the
error exponent and communication rate per sample for simple binary hypothesis testing
problems; see~\cite{HanAmari98} for a survey. We remark that our
setting differs from these classic settings since we 
consider a
composite hypothesis testing problem. Furthermore, we do not focus on the
error-exponent and allow arbitrarily large number of samples $n$. 
In particular, the error exponent can
be shown to be $0$ when we restrict to communication of rate $0$
($cf.$~\cite{ShalabyPapamarcou92}), which is an allowed regime for us
since we can take as many samples as we like to minimize the communication.

The problem of distributed independence testing with multiple rounds of interactive
communication was studied in~\cite{XiangKimAllerton12,
  XiangKimISIT13}. Similar problems with more general hypotheses
  or  more elaborate communication
models have been considered in~\cite{ZhaoLai14,
  WiggerTimo16, katz2016collaborative, sreekumar2019distributed}. Error exponent for the conditional independence
testing problem is studied in~\cite{rahman2012optimality}, where both
upper and lower bound for it are obtained. Recently, and subsequent to the publication of the initial version~\cite{krs-ht-isit18} of
this paper, related problems were considered
in various works. In~\cite{hadar2019error}, an improved upper bound on
the Stein exponent for testing between two known positive Gaussian
correlations is provided.
The communication complexity of estimating one-dimensional Gaussian
correlations was established in~\cite{hadar2019communication}
 and that of independence testing over discrete alphabet in the large
sample regime was characterized in~\cite{andoni2019}. The tradeoff between communication complexity and
sample complexity for detecting pairwise correlations is studied in
\cite{dagan2018detecting}. A related line of recent work considers composite hypothesis testing under communication, privacy, and shared randomness constraints \cite{acharya2019inferenceI, acharya2019inferenceII,  pmlr-v89-acharya19b, acharya2019domain}. However, the constraints are placed on each independent sample rather than on parties observing multiple correlated samples. 
In particular, none of the prior works consider our specific composite hypothesis testing
  problem.

The related problem of estimating the correlation vector for the same setting as ours was
studied in~\cite{hadar2019distributed}. It is plausible to use the distributed estimation
  scheme of~\cite{hadar2019distributed} or similar estimation schemes
  to do testing, but we establish a lower bound to show that this
  approach will be suboptimal in communication requirement. 
Our main result is the characterization of the minimum communication
needed for distributed correlation testing. Our proposed distributed
test uses one-way communication and solves the $d$-dimensional
problem by reducing it to the case $d=1$. This is done by multiplying
the observation vectors $X_t$s of $\cP_1$ with a scaled random sign
vector. Specifically, for $d=1$, our test entails the use of shared randomness to sample a vector that is
close to $\cP_1$'s overall observation $(X_1, ..., X_n)$, sending the
identity of this vector to $\cP_2$, and then $\cP_2$ checking if its
observation vector $(Y_1, ..., Y_n)$ is close to this vector as
well. We show that this test requires roughly $\max\{(1/\tau^2)\log
1/\ep, (1/\tau^2 -1)\log 1/\delta\}$ bits of communication 
to get probabilities of false alarm and missed detection to be less
than $\delta$ and $\ep$, respectively, when $n$ is sufficiently
large. For a general $d$, noting that the multiplication with random sign will yield a one dimensional
correlation testing problem with correlation roughly
$\|\rho\|_2/\sqrt{d}$, we show that the $d$-dimensional problem can be
resolved using roughly $(d/\tau^2)\max\{\log 1/\ep, \log 1/\delta\}$
bits of communication.

Our proposed test is practical. In fact, we have
  simulated a version with slightly different parameters than those
  presented in our theoretical analysis below; the empirical
  performance is depicted in Figure~\ref{fig:commVsErr}. A phase
  transition in probability of error can be seen clearly when we
  communicate number of bits proportional\footnote{As
    will be seen below, our proposed test uses a ``median trick'' to convert the
  one-dimensional test to a $d$-dimensional test. In our simulation,
  even the probabilities of correctness for the one-dimensional test
  are boosted to the desired levels by repeating the tests and using a
  similar ``median trick''.} to $d/\tau^2$.

\begin{figure}[h]
\centering
\includegraphics[scale=0.7]{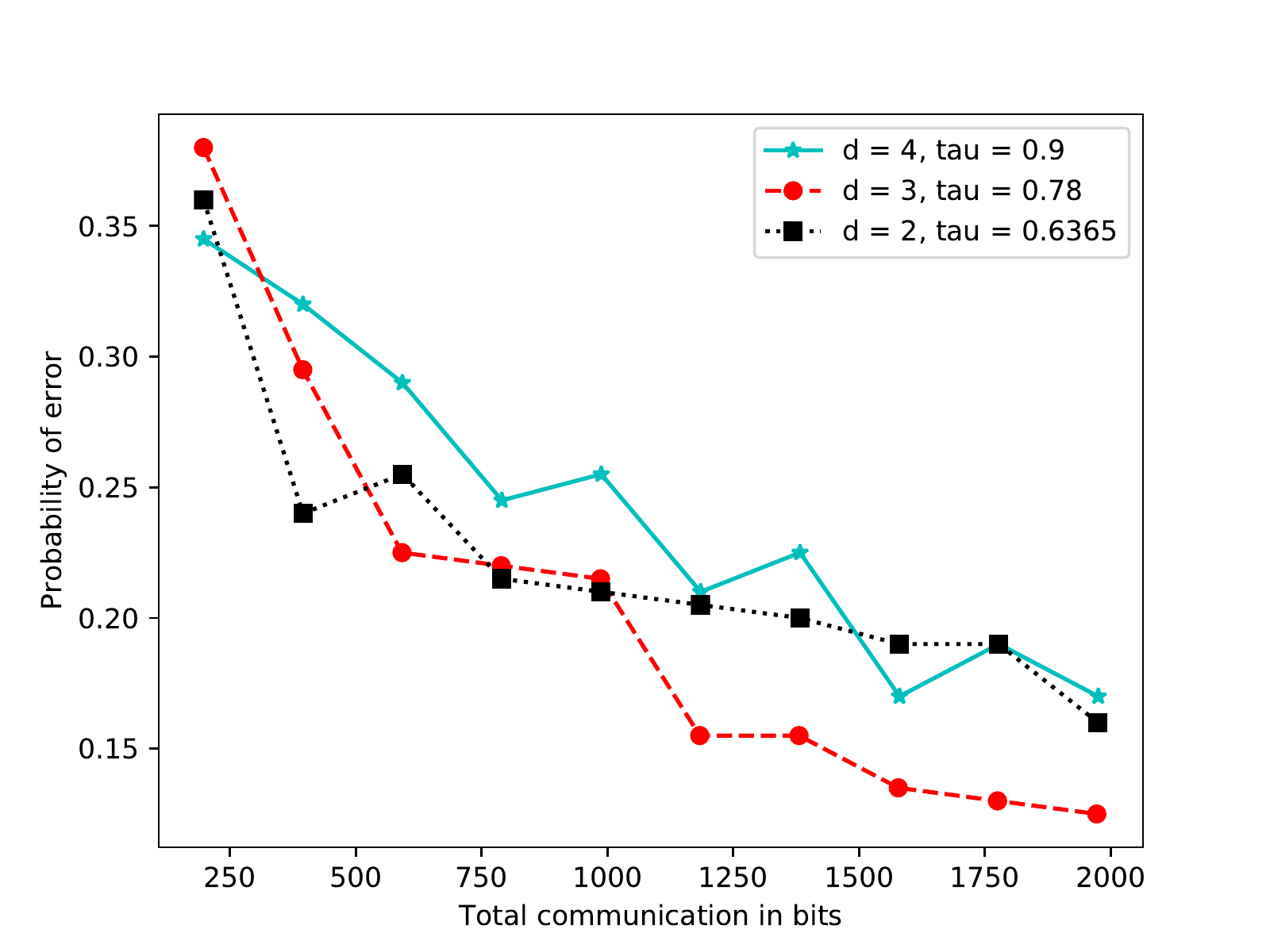}
\caption{Performance of the $d$-dimensional test for different values
  of $d$ and $\tau^2$ with $d/\tau^2 \approx 4.93$. The probability of
  error (in $y$-axis) is the average of probability of false alarm and
  probability of missed detection, evaluated by averaging over $100$ iterations.}\label{fig:commVsErr} 
\end{figure}

Interestingly, we establish a lower bound that shows that the amount
of one-way 
communication used by our protocol for $d=1$ is optimal among all
one-way communication protocols. We show this bound by using the notions
of hypercontractivity and reverse hypercontractivity
($cf.$~\cite{Bonami70, Gross75,
  Beckner75, Borell82, AhlswedeGacs76, Mossel13}).
Specifically, we note that the acceptance region corresponding to
one-way communication corresponds to a union of disjoint rectangle
sets. We use hypercontractivity to relate the
measures of rectangle sets
under the joint distribution corresponding to
$|\rho|>\tau$ and the product distribution corresponding to
$\rho=0$, which in turn leads to the required lower bound. In fact, 
by using the tensorization property we extend the bound to
a general $d$ to obtain a lower bound on one-way communication of
roughly $(d/\tau^2)\max\{\log 1/\ep, \log 1/\delta\}$. 

Recently, a strong data processing inequality for interactive correlation estimation was derived in ~\cite{hadar2019communication}. We invoke this result to show
that roughly $d/\tau^2$ bits of communication are needed even when interactive
communication is allowed, rendering our proposed one-way communication
protocol optimal among interactive protocols. We note that this bound is slightly weaker for
one-way communication than the one obtained using
hypercontractivity.

As mentioned earlier, the related problem of correlation vector
estimation was studied in~\cite{hadar2019distributed}. In that work,
an estimation protocol was given that uses roughly $d^2/\tau^2$ bits
of communication to estimate $\rho$ within a mean squared error of
$\tau$. Clearly, directly using this estimate to test will not be
communication optimal. However, a natural question arises: can we find
a better distributed estimation protocol that will remain
communication-optimal even for testing? We show that, in fact,
$d^2/\tau^2$ bits of communication are necessary for estimation,
whereby estimate-and-test strategy is strictly suboptimal for testing.

We note that our proposed one-way communication scheme is related
closely to the scheme in~\cite{GVJR2016} where communication for common randomness
generation ($cf.$~\cite{AhlCsi98}) was considered. 
We draw on the heuristic connection between independence testing and common randomness
generation highlighted in \cite{TyaWat14, TyaWat14ii} to adapt the
scheme of \cite{GVJR2016} to devise a distributed correlation test.

We remark that a simple scheme is possible for $d=1$ that
quantizes each value $X_t$ to its sign $\indicator\{X_t\geq 0\}$ and
uses the known sample complexity results for independence testing for
the collocated case ($cf.$~\cite{AcharyaDasKamath15}). This, too, will
result in a scheme that requires $O(1/\tau^2)$ bits of
communication. However, we noted in~\cite{krs-ht-isit18}, where we study the communication complexity of one-dimensional independence testing, that our
proposed scheme uses communication that is a constant factor lower that
this baseline scheme. On the other hand, our scheme requires
a much larger $n$ than this baseline scheme for $d=1$.

The remainder of the paper is organized as follows. We present our
problem formulation in the next section, followed by the main results
in Section~\ref{s:results}. Our distributed correlation test as well
as its analysis is presented in Section~\ref{s:scheme}. The proof of
our lower bounds for one-way communication is in
Section~\ref{s:proofLB}
and for interactive communication is in Section~\ref{s:proofLB2}. We
  conclude with a discussion and some extensions of our results in the final section.

{\em Notation.} Random variables are denoted by capital letters such
as $X$, $Y$, $etc.$; their specific realizations by the corresponding
small letters such as $x$, $y$, $etc.$; and their ranges by the
corresponding calligraphic forms such as $\cX$, $\cY$, $etc.$. $[N]$ denotes the set of integers $\{1,2,\ldots,N\}$. For a distribution $\mathbb{P}_\rho$ parametrized by $\rho$, we use $\mathbb{E}_\rho[X]$ to denote the expectation of the random variable $X$ with respect to $\mathbb{P}_\rho$. $\mathbb{P}_{\cH_i}(A)$ denotes the probability of event $A$ under hypothesis $\cH_i$. All the logarithms are to the base $2$; when
needed, we use $\ln a$ to denote the natural logarithm of $a$. For a
vector $a$, $a(i)$ denotes the $i$-th coordinate,
$a^T$ denotes the transpose, and
$\|a\|_p:=\left(\sum_{i=1}^d|a(i)|^p\right)^{\frac 1
    p}$ denotes the $\ell_p$-norm.

\section{Problem setup}
We consider jointly Gaussian random variables $X\in \mathbb{R}^d$ and
$Y\in\mathbb{R}$ with joint distribution as follows: for $\rho(i)
\in [-1,1], 1 \le i \le d$, we assume that
\begin{align}
\mathbb{E}[X(i)] &= 0, \quad \mathbb{E}[X(i)X(j)] = \indicator\left\{i=j\right\}, 1 \le
i\le j\le d, \nonumber \\ \mathbb{E}[Y|X] &= \sum_{i=1}^d \rho(i)
X(i),\quad \mathbb{E}[Y^2] = 1.
\label{e:model}
\end{align}
Note that the assumptions above imply $\mathbb{E}[Y] = 0$. Since
we assume $\bEE{Y^2}=1$, Jensen's inequality gives
\[
\|\rho\|_2^2 = \mathbb{E}\left[\mathbb{E}[Y|X]^2\right] \le \bEE{Y^2}
= 1.
\]
Alternatively, we can describe the joint distribution of $X$ and $Y$
as follows:
\[
Y = \rho^TX + \sqrt{1-\|\rho\|_2^2}Z,
\]
where $Z$ is a standard normal random variable, and $X$ and $Z$ are
independent.

Let $(X_t, Y_t)_{t=1}^{n}$ denote $n$ independent copies of $(X,Y)$.
We consider a distributed hypothesis testing problem where parties
$\cP_1$ and $\cP_2$ observe $X^n=(X_1, ..., X_n)$ and $Y^n=(Y_1, ...,
Y_n)$, respectively, and seek to resolve the following composite
hypothesis testing problem:
\begin{align*}
&\mathcal{H}_0^d : \|\rho\|_2 \ge \tau,\\ &\mathcal{H}_1^d : \rho = 0,
\end{align*}
where $\tau$ takes values in $(0,1]$ and is known to both the parties.
  
  To determine the true hypothesis, the parties communicate with
  each other interactively in multiple rounds. Specifically, the
  parties use an $r$-round interactive communication protocol
  $\pi$ that comprises mappings $f_1, ..., f_r$; $\cP_1$ and $\cP_2$
  use mappings $f_i$, $1\leq i \leq r$, to communicate in odd and even
  rounds $i$, respectively. Each mapping $f_i$ takes as input the
  local observation of the party, the previously seen communication,
  and a shared random variable $V$ available to both the parties and
  outputs a binary string. Formally, denoting by $C_j$ the random
  binary string sent in round $j$, we have
  \begin{align*}
    f_i: (X^n, C_1, ..., C_{i-1}, V)\mapsto C_i\in
    \{0,1\}^{\ell_i},\quad 1\leq i\leq r, i \text{ odd}, \\ f_i: (Y^n,
    C_1, ..., C_{i-1}, V)\mapsto C_i\in \{0,1\}^{\ell_i},\quad 1\leq
    i\leq r, i \text{ even},
  \end{align*}
  where $\ell_i$, $1\leq i \leq r$, denotes the length of
  communication in round $i$. The overall random communication $(C_1,
  ..., C_r)$ is called the transcript of the protocol and is denoted
  by $\Pi$. Furthermore, we denote by $|\pi|$ the length
  $\sum_{i=1}^r\ell_i$ of the transcript of the protocol. For
  simplicity, we describe our formulation below only for odd $r$; the
  case of even $r$ can be handled similarly.

For an odd $r$, an {\em $r$-interactive distributed test} $T=(\pi, g)$
consists of an $r$-round interactive communication protocol $\pi$ and
a decision mapping $g:(Y^n,\Pi, V)\mapsto \{0,1\}$. A {distributed
  test} $T=(\pi,g)$ constitutes an $(\ell,\delta,\ep,\tau)$-test with
observation length $n$ if $|\pi|=\ell$ and
\begin{align*}
&\bPPr{\cH_0^d}{g\big(Y^n, \Pi, V\big) = 1} \leq \delta, \text{ and}
  \\ &\bPPr{\cH_1^d}{g\big(Y^n, \Pi, V\big) = 0} \leq \ep.
\end{align*}
Our goal is to design a distributed test that communicates as few bits
as possible, while possessing the desired probabilities of error.
Formally, we seek bounds for the minimum communication for
$d$-dimensional correlation testing, defined next.
\begin{definition}
Given $\delta, \ep \in [0,1]$ and $\tau \in (0,1]$, the minimum
  $r$-round communication for $d$-dimensional correlation testing
  $C_d^r(\delta, \ep,\tau)$ is the least $\ell$ such that there exists
  an $(\ell,\delta, \ep,\tau)$-test $T=(\pi, g)$ with an $r$-round
  interactive communication protocol $\pi$, for all observations of
  length $n$ sufficiently large.

The minimum communication for $d$-dimensional correlation testing
$C_d(\delta, \ep, \tau)$ is the infimum over $r\in\mN$ of
$C_d^r(\delta, \ep, \tau)$.
\end{definition}
While we have formulated the problem for general $r$, our main focus
in this work is the minimum communication $C_d^1(\delta, \ep, \tau)$
for one-way communication protocols.  We characterize the dependence
of $C_d^1(\delta, \ep,\tau)$ on $\ep$ (respectively $\delta$), up to
absolute multiplicative constants and additive constants that may
depend on $\delta$ (respectively $\ep$). Furthermore, we show that the
dependence on $\ep$ is optimal up to constant factors, even when
additional rounds of interaction are available. We summarize our
results formally in the next section.  But before that we formulate
the related problem of correlation estimation.

We consider the problem of estimating $\rho$ for the joint
distribution given in~\eqref{e:model}. The observation of the parties
and the $r$-round interactive communication protocol is defined as
before; as above, we define the problem only for odd $r$. An {\em
  $r$-interactive distributed estimate} is a pair $(\pi,
\widehat{\rho})$ where $\pi$ is an $r$-round interactive communication
protocol and $\widehat{\rho}: (Y^n, \Pi, V)\mapsto
  \widehat{\rho}(Y^n,\Pi, V)\in [-1,1]^d$.

An $r$-interactive distributed estimate
$(\pi,\widehat{\rho})$ constitutes an $(\ell, \tau)$-estimate if
$|\pi|\leq \ell$ and
\begin{equation}
\bE{\rho}{\|\widehat{\rho}(Y^n, \Pi, V) - \rho\|_2^2}\leq
\tau^2,
\label{eqn:estaccuracy}
\end{equation}
where $\mathbb{E}_{\rho}$ denotes the expectation with respect to the distribution in~\eqref{e:model}.
\begin{definition}
Given $\tau \in (0,1]$, the minimum $r$-round communication for
  $d$-dimensional correlation estimation $\widetilde{C}_d^r(\tau)$ is the
  least $\ell$ such that there exists an $(\ell,\tau)$-estimate
  $T=(\pi, \widehat{\rho})$ with an $r$-round interactive
  communication protocol $\pi$, for all observations of length $n$
  sufficiently large.

The minimum communication for $d$-dimensional correlation estimation
$\widetilde{C}_d(\tau)$ is the infimum over $r\in\mN$ of
$\widetilde{C}_d^r(\tau)$.
\end{definition}
In the next section, we will provide a lower bound for
$\widetilde{C}_d(\tau)$, which establishes roughly that correlation
estimation requires much more communication than correlation testing.
\section{Main results}\label{s:results}
We have divided our results into three parts: upper bounds for
$C_d^1(\delta, \ep,\tau)$ achieved by our proposed scheme, a lower
bound for $C_d^1(\delta, \ep,\tau)$, and a lower bound for $C_d(\delta, \ep,\tau)$ with $r>1$. These parts are presented in separate
sections below. The upshot of our results is that our protocol with
$r=1$ uses almost minimum communication not only among one-way
communication protocols, but also among interactive
protocols. Furthermore, we establish a lower bound for
the correlation estimation protocol which shows that it requires
strictly more communication than correlation testing.

\subsection{Upper bounds for $C_d^1(\delta, \ep,\tau)$} 
Our goal in this work is to handle high dimensional correlation
testing. Interestingly, we establish a reduction which relates the
high dimensional case to $d=1$ case. To state our general result,
first we state the result for $d=1$.

\begin{theorem}\label{t:upper_bound1}   For every $\delta, \ep \in (0,1)$,
\begin{align*}
C_1^1(\delta, \ep, \tau) &\leq \frac{1}{\tau^2} \left(\sqrt{\log
  \frac{1}{\varepsilon}} +\sqrt{(1-\tau^2)\log
  \frac{1}{\delta}}\right)^2 \\ &\qquad+ \ln \left(\frac{2}{\tau^2}
\left(\sqrt{\ln \frac{1}{\varepsilon}} +\sqrt{(1-\tau^2)\ln
  \frac{1}{\delta}}\right)^2 + 1\right) + O\left(\sqrt{\log
  \frac{1}{\delta}\log\frac{1}{\varepsilon}}\right).
\end{align*}
\end{theorem}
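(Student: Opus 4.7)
The plan is to analyze the one-way protocol sketched in the introduction. Using shared randomness, both parties generate $M$ i.i.d.\ vectors $U_1,\ldots,U_M$ uniform on the sphere $\sqrt n\,S^{n-1}\subset\mR^n$. Party $\cP_1$ computes $K=\argmax_{1\le j\le M}\langle X^n,U_j\rangle$ and transmits $K$ using $\lceil\log M\rceil$ bits. Party $\cP_2$ forms the statistic $T:=\langle Y^n,U_K\rangle/\sqrt n$ and accepts $\cH_0^1$ iff $|T|\ge\theta$ for a threshold $\theta$ to be chosen; the absolute value handles both signs of $\rho$ under $\cH_0^1$. The theorem then reduces to choosing $M$ and $\theta$ that realize the prescribed error probabilities for all $n$ sufficiently large.

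Under $\cH_1^1$, $Y^n\sim N(0,I_n)$ is independent of $(X^n,U_1,\ldots,U_M)$, so conditionally on $U_K$ (whose norm equals $\sqrt n$) we have $\langle Y^n,U_K\rangle\sim N(0,n)$, making $T\sim N(0,1)$ exactly; the Gaussian tail bound yields $\bPPr{\cH_1^1}{|T|\ge\theta}\le 2\exp(-\theta^2/2)\le\ep$ once $\theta=\sqrt{2\ln(2/\ep)}$. Under $\cH_0^1$, take $\rho=\tau$ (worst case) and decompose $Y^n=\tau X^n+\sqrt{1-\tau^2}Z^n$ with $Z^n\sim N(0,I_n)$ independent of $X^n$, giving
\[ T \;=\; \tau\cdot\frac{\langle X^n,U_K\rangle}{\sqrt n}\;+\;\sqrt{1-\tau^2}\cdot\frac{\langle Z^n,U_K\rangle}{\sqrt n}.\]
The second summand is exactly $\sqrt{1-\tau^2}\,N(0,1)$ conditional on $(X^n,U_1,\ldots,U_M)$. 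For the first, the projection $\langle X^n,U_j\rangle/\sqrt n$ converges to $N(0,1)$ as $n\to\infty$, so the Fisher--Tippett-type extreme-value estimate for the maximum of $M$ i.i.d.\ Gaussians gives
\[ \max_{1\le j\le M}\frac{\langle X^n,U_j\rangle}{\sqrt n}\;\ge\;\sqrt{2\ln M-\ln\ln M-C}\]
with probability at least $1-\delta/2$, for an absolute constant $C$ and $n$ large. A union bound then yields $\bPPr{\cH_0^1}{T<\theta}\le\delta$ whenever
\[ \tau\sqrt{2\ln M-\ln\ln M-C}\;\ge\;\sqrt{2\ln(2/\ep)}+\sqrt{2(1-\tau^2)\ln(2/\delta)}.\]

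Squaring this inequality, solving for $\ln M$, converting from natural to binary logarithm, and accounting for the ceiling in the bit count reproduces the three summands in the bound. The main term $(1/\tau^2)\bigl(\sqrt{\log(1/\ep)}+\sqrt{(1-\tau^2)\log(1/\delta)}\bigr)^{\!2}$ comes from the leading $2\ln M$; the $\ln(\cdot)$ summand comes from carrying over the $\ln\ln M$ correction, using that $\ln M$ is itself of order $\tfrac{1}{\tau^2}\bigl(\sqrt{\ln(1/\ep)}+\sqrt{(1-\tau^2)\ln(1/\delta)}\bigr)^{\!2}$; and the $O(\sqrt{\log(1/\delta)\log(1/\ep)})$ remainder absorbs the cross term produced when squaring a $\sqrt a+\sqrt b$ inequality with an $O(1)$ slack. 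The principal technical hurdle is making the Gaussian approximation for $\langle X^n,U_j\rangle/\sqrt n$ sharp at the tail scale $\sqrt{2\ln M}$, since this projection is only approximately $N(0,1)$ with a Beta-type exact law; the freedom to take $n$ arbitrarily large, which is built into the definition of $C_d^1(\cdot)$, lets us reduce the analysis to the clean i.i.d.\ Gaussian extreme-value asymptotics and drive the residual error to zero.
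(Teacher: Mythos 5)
Your proposal rests on the same underlying scheme as the paper: use shared randomness to draw a codebook of directions, have $\cP_1$ send the index of the direction best aligned with $X^n$, and have $\cP_2$ threshold its own projection on that direction. The departures are in protocol fine structure and in how the analysis is organized. The paper draws Rademacher sign vectors and has $\cP_1$ report the \emph{first} index $j$ whose projection exceeds a fixed threshold $r\sqrt n$ (declaring $\cH_1$ if none does), controls the approximation error via Berry--Esseen rather than a pure limit, and develops a one-sided test for $\cH_0^+\colon\rho\ge\tau$ first, converting it to the two-sided test via Lemma~\ref{l:reduction-two-one}. You draw spherically uniform vectors, have $\cP_1$ always report the \emph{argmax}, and fold both signs of $\rho$ into a single two-sided check $|T|\ge\theta$; this is somewhat leaner since there is no coverage event and no separate one-sided reduction. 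The arithmetic you sketch — squaring, solving for $\ln M$, converting bases — does land on the theorem's leading term $\tfrac{1}{\tau^2}\bigl(\sqrt{\log(1/\ep)}+\sqrt{(1-\tau^2)\log(1/\delta)}\bigr)^2$, and the $\ln\ln M$ correction and constant slack go into the secondary $\ln(\cdot)$ and $O(\sqrt{\log(1/\delta)\log(1/\ep)})$ terms, so the bound is recovered (your secondary term is in fact a bit smaller than the one stated, which is fine for an upper bound).

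Two points to tighten. First, the extreme-value estimate $\max_j\ge\sqrt{2\ln M-\ln\ln M-C}$ with probability $\ge1-\delta/2$ does not hold for an absolute constant $C$: the Gumbel scaling gives $\Pr(\max<a_M-b_M s)\approx\exp(-e^s)$, so you need $s\approx\ln\ln(2/\delta)$, i.e.\ $C$ carries a $\ln\ln(2/\delta)$ dependence. This is lower order and is absorbed into the stated remainder, but the claim of an absolute $C$ is not correct as written. Second, the step ``the projection is only approximately $N(0,1)$; take $n$ large'' needs a uniform-in-the-tail argument at scale $\sqrt{2\ln M}$, since the exact projection law is Beta-type and you are using it out to its upper tail. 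This is doable because $M$, $\delta$, $\ep$, $\tau$ are fixed before $n$ is sent to infinity, but it is worth noting that the paper avoids the issue entirely by using Rademacher inner products with an explicit Berry--Esseen bound, which gives a quantitative handle on exactly the tail-scale error you identify as the main hurdle.
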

To extend this result to the case of general $d$, we convert the
$d$-dimensional problem to the one-dimensional problem as follows:
Party $\cP_1$ applies a random rotation (using common randomness $V$)
to the observed vector $X$ to obtain $\widetilde{X}$. We show that the
first coordinate $\widetilde{X}(1)$ of the resulting vector
$\widetilde{X}$ and $Y$ have correlation coefficient roughly
$(\tau/\sqrt{d})$ under $\mathcal{H}_0^1$ (with high probability) and
correlation coefficient $0$ under $\mathcal{H}_1^1$. Using this fact
(and a reduction result provided in the next section), we get the
following upper bound for $C_d^1(\delta,\ep,\tau)$.

\begin{theorem}\label{t:upper_boundd} There exists a positive constant
  $c>0$ such that for every $\delta,\ep \in (0,1)$ we have
\begin{align*}
&C_d^1(\delta, \ep,\tau) \leq c \cdot \frac{d}{\tau^2} \cdot
  \max\left\{\log \frac{1}{\delta},\log \frac{1}{\varepsilon}\right\}
  + O\left(\ln \frac{d}{\tau^2}\right),
\end{align*}
where the second term has no dependence on $\delta$ or $\ep$.
\end{theorem}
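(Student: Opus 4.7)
The plan is to reduce the $d$-dimensional composite problem to the scalar problem handled by Theorem~\ref{t:upper_bound1} via a random one-dimensional projection, and then boost the resulting weak test through a majority-vote (``median trick'') over many independent trials. Using the shared randomness $V$, party $\cP_1$ would sample a uniform random rotation $R$ of $\mR^d$ and work with the first coordinate $\widetilde X_t(1) = u^T X_t$ of $\widetilde X_t = R X_t$, where $u$ is the first row of $R$ and hence uniform on the unit sphere. Since $\bEE{u u^T} = I_d/d$ and Gaussianity is preserved under linear projection, conditional on $u$ the pairs $(\widetilde X_t(1), Y_t)$ are i.i.d.\ jointly Gaussian with zero means, unit marginal variances, and correlation $\widetilde\rho = u^T\rho$. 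Under $\cH_1^d$, $\widetilde\rho = 0$; under $\cH_0^d$, $\widetilde\rho$ is random with $\bEE{\widetilde\rho^2} = \|\rho\|_2^2/d \ge \tau^2/d$.

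The central obstacle is that under $\cH_0^d$ the induced correlation $\widetilde\rho$ can be small with non-trivial probability, so a single invocation of Theorem~\ref{t:upper_bound1} at scale $\tau' = \tau/\sqrt{2d}$ does not meet the miss-probability target. I would circumvent this using a Paley-Zygmund lower-tail bound: a standard fourth-moment calculation on the uniform sphere gives $\bEE{(u^T\rho)^4} \le 3 \, (\bEE{(u^T\rho)^2})^2$, and the Paley-Zygmund inequality then yields $\bPr{\widetilde\rho^2 \ge \|\rho\|_2^2/(2d)} \ge 1/12$. Hence under $\cH_0^d$, with probability at least $1/12$, the reduced one-dimensional problem has correlation $|\widetilde\rho| \ge \tau'$.

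I would then invoke Theorem~\ref{t:upper_bound1} at scale $\tau'$ with constant error parameters $\ep'' = \delta'' = 1/48$, obtaining a base one-dimensional test of length $O(d/\tau^2) + O(\ln(d/\tau^2))$ bits. Writing $\chi$ for its output bit (with $0$ meaning ``declare $\cH_0^1$''), marginalizing over $u$ gives $\bPPr{\cH_0^d}{\chi = 0} \ge (1/12)(1-\delta'') \ge 1/13$, while $\bPPr{\cH_1^d}{\chi = 0} \le \ep'' = 1/48$ (since $\cH_1^d$ forces $\widetilde\rho \equiv 0$); these two acceptance probabilities are separated by a constant gap. Partitioning the arbitrarily large $n$ samples into $m$ disjoint batches and running the base test independently on each batch with a fresh projection $u^{(k)}$ yields i.i.d.\ bits $\chi_1, \dots, \chi_m$, and I would declare $\cH_0^d$ iff the fraction of zeros among them exceeds a threshold placed inside that gap. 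A Chernoff bound on the resulting biased Bernoulli sum drives both error types below any target, so $m = \Theta(\max\{\log(1/\ep), \log(1/\delta)\})$ suffices.

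To finish, I would tally the communication: the total is $m \cdot [O(d/\tau^2) + O(\ln(d/\tau^2))]$. Since $\ln(d/\tau^2) \le d/\tau^2$ in the relevant regime, this simplifies to $O((d/\tau^2)\max\{\log(1/\delta), \log(1/\ep)\}) + O(\ln(d/\tau^2))$, matching the claimed bound; the residual $O(\ln(d/\tau^2))$ independent of $\ep,\delta$ handles the edge case where $\max\{\log(1/\delta), \log(1/\ep)\}$ is bounded. The genuinely delicate step is the Paley-Zygmund lower-tail bound coupled with the amplification that compensates for the modest probability ($\ge 1/12$) of a ``favorable'' random projection; everything else is routine accounting on top of Theorem~\ref{t:upper_bound1}.
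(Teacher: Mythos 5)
Your proposal is correct and mirrors the paper's proof: reduce to $d=1$ via a random one-dimensional projection of $X$, use Paley--Zygmund to guarantee that with constant probability the projected correlation is at least $\tau/\sqrt{2d}$, invoke Theorem~\ref{t:upper_bound1} at constant error levels to get a constant-error $d$-dimensional test, and then amplify by majority vote over $m=\Theta(\max\{\log(1/\delta),\log(1/\ep)\})$ independent repetitions. The only differences are cosmetic --- you project with a uniform sphere vector $u$ (fourth-moment bound $\bEE{(u^T\rho)^4}\le 3\,\bEE{(u^T\rho)^2}^2$, Paley--Zygmund constant $1/12$), whereas the paper projects with a scaled Rademacher vector $R=W/\sqrt d$ (constant $1/28$), and your numerical error parameters differ slightly --- but the logical skeleton is identical.
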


\subsection{Lower bounds for $C_d^1(\delta,
  \ep,\tau)$}\label{ss:lowerbound} Our lower bound for the case $d=1$
matches the upper bound of Theorem~\ref{t:upper_bound1} up to additive
terms of lower order to yield the following characterization for
$C_1^1(\delta, \ep, \tau)$.
\begin{theorem}\label{t:SS1} 
For $\delta\leq 1/2$ and $\ep$ such that $\delta +
\ep^{\frac{1-\tau}{1+\tau}} \leq 1$, we have\footnote{With an abuse
  of notation, the $O(x)$ notation for the additive error denotes that
  the upper and lower bounds differ by at most an $O(x)$ term.}
\[
C_1^1(\delta, \ep, \tau) = \frac 1{\tau^2} \log \frac 1 \ep+
O_\delta\left(\sqrt{ \log \frac 1 \ep}\right),
\]
and for $\delta,\ep\in(0,1/2)$, we have
\[
C_1^1(\delta, \ep, \tau) = \frac {1-\tau^2}{\tau^2} \log \frac 1
\delta+ O_\ep\left(\sqrt{\log \frac 1 \delta}\right),
\]
where the notation $O_x$ denotes that the constant implied by $O$
depends on $x$.
\end{theorem}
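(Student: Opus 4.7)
The matching upper bounds in Theorem~\ref{t:SS1} both follow from Theorem~\ref{t:upper_bound1} by expanding $(\sqrt{\log 1/\ep}+\sqrt{(1-\tau^{2})\log 1/\delta})^{2}$ and retaining the leading term, so the plan focuses on the matching lower bounds on $C_1^1(\delta,\ep,\tau)$. By symmetry one may take the worst-case correlation to be $\rho=\tau$, and by conditioning on a favourable realization of the shared randomness $V$ it suffices to analyze deterministic one-way protocols. For such a protocol with at most $M=2^{\ell}$ messages, the $\cH_0^1$-acceptance region is a disjoint union of rectangles $\cA_m\times\cB_m$, where $\{\cA_m\}$ partitions $\mR^{n}$ into the message pre-images of $\cP_1$'s map and $\cB_m\subseteq\mR^{n}$ is $\cP_2$'s acceptance set upon receiving message $m$. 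The two error constraints become
\begin{equation*}
\sum_m \bPP{\tau}(\cA_m\times\cB_m)\geq 1-\delta\qquad\text{and}\qquad\sum_m\bPP{X}(\cA_m)\,\bPP{Y}(\cB_m)\leq\ep,
\end{equation*}
where $\bPP{X}$ and $\bPP{Y}$ denote the $n$-fold standard Gaussian marginals.

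For the first statement I would apply the sharp Gaussian hypercontractivity inequality, which after tensorization gives, for every $a,b\in(0,1)$ with $(1-a)(1-b)\geq\tau^{2}ab$,
\begin{equation*}
\bPP{\tau}(\cA\times\cB)\;\leq\;\bPP{X}(\cA)^{a}\,\bPP{Y}(\cB)^{b}.
\end{equation*}
Summing over $m$ and applying H\"older's inequality via the split $u^{a}v^{b}=(uv)^{b}u^{a-b}$ with conjugate pair $(1/(a-b),1/(1-a+b))$, together with the partition identity $\sum_m\bPP{X}(\cA_m)=1$ and the false-alarm constraint, consolidates the estimate to $1-\delta\leq M^{\,1-a}\,\ep^{\,b}$ on the hypercontractive boundary $(1-a)(1-b)=\tau^{2}ab$. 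This rearranges to a single-parameter lower bound on $\ell$; optimizing over $b\in(0,1)$ with stationary point $b^{\ast}=\sqrt{\log\frac{1}{1-\delta}\big/\bigl((1-\tau^{2})\log\frac1\ep\bigr)}$ produces
\begin{equation*}
\ell\;\geq\;\tfrac{1}{\tau^{2}}\Bigl(\sqrt{\log\tfrac1\ep}-\sqrt{(1-\tau^{2})\log\tfrac1{1-\delta}}\,\Bigr)^{2},
\end{equation*}
which for fixed $\delta$ and $\ep\to 0$ matches $\tfrac{1}{\tau^{2}}\log\tfrac1\ep-O_{\delta}\bigl(\sqrt{\log\tfrac1\ep}\bigr)$.

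For the second statement the $(1-\tau^2)/\tau^2$ coefficient cannot be recovered by forward hypercontractivity, so I would instead apply the reverse hypercontractivity inequality to the complementary rectangles $\cA_m\times\cB_m^{c}$ that make up the $\cH_1^1$-acceptance region. The tensorized reverse inequality asserts that for $a,b>1$ with $(a-1)(b-1)\geq\tau^{2}ab$,
\begin{equation*}
\bPP{\tau}(\cA\times\cB)\;\geq\;\bPP{X}(\cA)^{a}\,\bPP{Y}(\cB)^{b}
\end{equation*}
(with a small thickening to handle indicators properly). Combined with $\sum_m\bPP{\tau}(\cA_m\times\cB_m^{c})\leq\delta$ and $\sum_m\bPP{X}(\cA_m)\,\bPP{Y}(\cB_m^{c})\geq 1-\ep$, a H\"older manipulation dual to the one above yields $(1-\ep)^{b}\leq\delta\cdot M^{a-1}$. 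The extremal parameter $a=1/(1-\tau^{2})$, attained in the limit $b\to\infty$ but used as a large finite value for the optimization, extracts exactly the coefficient $(a-1)^{-1}=(1-\tau^{2})/\tau^{2}$, and a finite-$b$ optimization analogous to the first part delivers the matching $O_{\ep}(\sqrt{\log\tfrac1\delta})$ additive error.

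The main obstacle is the coupled H\"older/hypercontractivity optimization producing precisely the $\sqrt{\cdot}$ additive terms stated in the theorem; a secondary subtlety is controlling the $b\to\infty$ limit of reverse hypercontractivity (needed for the exact $(1-\tau^{2})/\tau^{2}$ coefficient), which requires a careful balancing of the finite-$b$ loss against the target lower bound, together with a mild regularization to apply the inequality to indicator functions. The conditioning-on-best-$v$ reduction is standard and affects only absolute constants.
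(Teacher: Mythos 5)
Your proposal follows essentially the same route as the paper's proof (Theorems~\ref{t:lower_bound1} and~\ref{t:lower_bound2} together with Corollary~\ref{c:SS1}): forward hypercontractivity combined with H\"older's inequality on the $\cH_0^1$-accepting rectangles for the first estimate, reverse hypercontractivity on the $\cH_1^1$-accepting rectangles for the second, with both optimized over the (reverse) hypercontractivity ribbon as characterized by Gross's theorem; your reparameterization $(a,b)=(1/p',1/q)$ and the stationary points and limiting exponents you identify agree with the paper's. One small caution: ``conditioning on a favourable realization of $V$'' is not quite right, since no single $v$ need satisfy both error constraints simultaneously---the paper instead applies the per-$v$ inequality and then averages over $V$ using Jensen's inequality (concavity of $t\mapsto t^{1/q}$), which yields the same final bound but is the cleaner argument.
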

The proof of this result uses the notions of {\em hypercontractivity}
and {\em reverse hypercontractivity} and is given in
Section~\ref{s:proofLB}.

In fact, we can relate the $d$-dimensional problem to the one-dimensional problem by revealing extra information to $\cP_2$ to obtain a
matching lower bound for Theorem~\ref{t:upper_boundd}, from which the next result follows.
\begin{theorem}
\label{t:ddim}
For $0 < \tau \le 1$, $\delta\in(0,1/2)$, and $\ep$ such that
$\delta+ \ep^{\frac{1-\tau}{1+\tau}} \leq 1$, we have
\[
C_d^1(\delta, \ep, \tau) = \Theta\left(\frac d{\tau^2} \cdot
\max\left\{\log \frac 1 \ep, \log \frac 1 \delta\right\} \right).
\]
\end{theorem}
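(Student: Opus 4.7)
The upper bound is immediate from Theorem~\ref{t:upper_boundd}, so the only task is to establish the matching lower bound $C_d^1(\delta,\ep,\tau) = \Omega\!\left(\frac{d}{\tau^2}\max\!\left\{\log\frac{1}{\ep},\log\frac{1}{\delta}\right\}\right)$. The plan is to reduce the $d$-dimensional composite test to a one-dimensional correlation testing problem with a rescaled correlation $\tau' = \Theta(\tau/\sqrt d)$ and then invoke Theorem~\ref{t:SS1}. Plugging $\tau' = \tau/\sqrt d$ into the two branches of Theorem~\ref{t:SS1} yields lower bounds of the form $\Omega\!\left(\frac{d}{\tau^2}\log\frac{1}{\ep}\right)$ and $\Omega\!\left(\frac{d(1-\tau^2/d)}{\tau^2}\log\frac{1}{\delta}\right)$ respectively, whose maximum matches the claimed scaling (for $\tau$ bounded away from $1$).

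To execute the reduction, I would first restrict the composite alternative to the specific hard direction $\rho_* = (\tau/\sqrt d)\,\mathbf{1}$, which satisfies $\|\rho_*\|_2 = \tau$. Any one-way distributed test that is $(\ell,\delta,\ep,\tau)$-good for the composite problem is in particular $(\ell,\delta,\ep)$-good for distinguishing $\rho=0$ from the simple alternative $\rho=\rho_*$, so any lower bound on the simple instance transfers. Next, following the discussion preceding Theorem~\ref{t:ddim}, I would reveal to $\cP_2$ the coordinates $X(2),\ldots,X(d)$ as side information; this can only decrease the communication requirement. Using the (known) value of $\rho_*$, $\cP_2$ forms the residual $\tilde Y = Y - (\tau/\sqrt d)\sum_{j\ge 2}X(j)$; a short computation shows that $(X(1),\tilde Y)$ is jointly Gaussian with correlation $(\tau/\sqrt d)/\sqrt{1-\tau^2+\tau^2/d} = \Theta(\tau/\sqrt d)$ under the alternative $\rho=\rho_*$, and uncorrelated under the null $\rho=0$. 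Modulo normalising $\tilde Y$ to unit variance, the revealed problem is thus a one-dimensional correlation test to which Theorem~\ref{t:SS1} applies with $\tau' = \Theta(\tau/\sqrt d)$, yielding the desired lower bound.

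The main technical obstacle is the distributional bookkeeping in the reduction: the conditional variance of $\tilde Y$ differs between the two hypotheses, because under $\rho=0$ the subtraction of $(\tau/\sqrt d)\sum_{j\ge 2}X(j)$ introduces additional variance from $X(2:d)$, whereas under $\rho_*$ it cancels the $X(2:d)$-contribution. Hence one cannot invoke Theorem~\ref{t:SS1} as a pure black box; either a common normalisation of $\tilde Y$ must be chosen and its effect on the error parameters tracked, or the hypercontractivity-based rectangle argument of Section~\ref{s:proofLB} must be re-run in the conditional setting on $(X(1),Y)$ given $X(2:d)$. Carrying out this bookkeeping carefully preserves the $\tau' = \Theta(\tau/\sqrt d)$ effective correlation and completes the proof.
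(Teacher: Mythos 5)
Your reduction has a fatal flaw: revealing $X(2),\ldots,X(d)$ to $\cP_2$ trivializes the problem, so the resulting lower bound is vacuous. Under the alternative $\rho=\rho_*=(\tau/\sqrt d)\mathbf{1}$ we have $Y=(\tau/\sqrt d)\sum_{j=1}^d X(j)+\sqrt{1-\tau^2}\,Z$, so $Y$ is correlated with every revealed coordinate $X(j)$, $j\ge 2$ (the correlation of $Y$ with $\frac{1}{\sqrt{d-1}}\sum_{j\ge 2}X(j)$ is $\tau\sqrt{(d-1)/d}>0$). Under the null, $Y$ is independent of all of $X$. Since we allow $n\to\infty$, $\cP_2$ can estimate this correlation locally to arbitrary precision from its own data $(Y^n, X(2{:}d)^n)$ and decide the hypothesis with zero communication, so the augmented problem has communication complexity $0$. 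The "bookkeeping" issue you flag (the differing variances of $\tilde Y$ under the two hypotheses) is a symptom of this: the marginal law of $\cP_2$'s processed observation already separates the hypotheses, which is precisely what cannot be allowed in a genuine reduction. One also cannot rescue the argument by stripping $X(2{:}d)$ from $\cP_1$ as well: removing information from $\cP_1$ could only raise the modified complexity, so a lower bound for that problem does not transfer back.

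The paper's reduction avoids this by augmenting $\cP_2$'s side in a way that keeps $\cP_2$'s observation independent of $X$ under the null: instead of handing $\cP_2$ coordinates of $X$, it replaces $\cP_2$'s scalar $Y$ with a vector $Y=(Y(1),\ldots,Y(d))$ satisfying $\mathbb{E}[Y(i)\mid X]=\rho(i)X(i)$, of which the original observation is the coarsening $Y(1)+\cdots+Y(d)$. Because $Y$ is still independent of $X$ under $\cH_1^d$, the problem remains nontrivial, and the hypercontractivity rectangle argument (Theorem~\ref{t:lower_bound1}) applies verbatim with $\cX=\cY=\mathbb{R}^d$. Tensorization (Lemma~\ref{l:tensor}) together with Lemma~\ref{l:gross} then shows that the relevant ribbon parameter is governed by $\rho_{\max}^2=\max_i\rho(i)^2$, and choosing the hard instance $\rho(i)^2=\tau^2/d$ yields the $\Omega(d/\tau^2)$ factor directly. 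In other words, the paper does not reduce to a single scalar pair at all; it runs the one-way hypercontractivity bound in the $d$-dimensional product space and lets tensorization do the work that your $d\to 1$ reduction was meant to do.
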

We remark that the reduction of the general $d\geq 1$ case to the one-dimensional case used in the proof of lower bound differs from the
reduction in the upper bound; we provide the proof in
Section~\ref{s:proofLB}.  Nevertheless, it is interesting that we
obtain tight results by relating the high dimensional setting to the
one-dimensional setting.

\subsection{Lower bounds for $r\geq 1$}
Our final set of results provide lower bounds even for the interactive
setting, establishing the optimality of our proposed distributed test
even among interactive tests. To derive this lower bound, we use a
data processing inequality from~\cite{LiuCuffVerdu17}, which was used
in a similar context in~\cite{hadar2019communication}. In fact, using
this technique we can even derive a lower bound for the high
dimensional correlation estimation problem, showing that this problem
requires orderwise higher communication in comparison to correlation
testing.

We begin with the result for the correlation testing problem. Note
that we only prove optimality in the dependence on $\ep$, and not on
$\delta$.
\begin{theorem}
\label{t:interactive}
For $\delta,\ep, \tau \in (0,1)$, we have
\[
C_d(\delta, \ep,\tau) \ge
\frac{d}{\tau^2}\left((1-\delta)\log\frac{1}{\varepsilon} -1 \right).
\]
\end{theorem}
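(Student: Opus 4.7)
My proof will rest on two standard ingredients: a conversion of the testing guarantees into a KL-divergence lower bound on the transcript distribution, and the strong data processing inequality (SDPI) for interactive protocols over Gaussian sources established in \cite{hadar2019communication} using the framework of \cite{LiuCuffVerdu17}.

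For the first step, fix any $\rho^*$ with $\|\rho^*\|_2 \geq \tau$. Under $\rho = \rho^*$, the binary decision $g(Y^n,\Pi,V)$ takes value $0$ with probability at least $1-\delta$; under $\rho = 0$, it takes value $0$ with probability at most $\ep$. Applying the data processing inequality for KL divergence to the decision $g$, together with the standard binary-KL relaxation $D_{\mathrm{bin}}(p\,\|\,q) \geq p \log(1/q) - 1$ (which follows from $h_2(p)\leq 1$ and $\log(1/(1-q))\geq 0$), yields
\[
\KL{\mathbb{P}_{\rho^*}^{Y^n,\Pi,V}}{\mathbb{P}_0^{Y^n,\Pi,V}} \;\geq\; (1-\delta)\log\frac{1}{\ep} - 1.
\]

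For the second step, I invoke the interactive SDPI of \cite{hadar2019communication}. For one-dimensional Gaussian sources with per-sample correlation $\rho_0$, their inequality upper bounds the KL divergence between the transcript distribution under $\rho_0$ and that under the product law by roughly $\rho_0^2\,|\pi|$. Adapted to our $d$-dimensional model~\eqref{e:model}, the appropriate extension introduces an additional $1/d$ factor, giving
\[
\KL{\mathbb{P}_{\rho^*}^{Y^n,\Pi,V}}{\mathbb{P}_0^{Y^n,\Pi,V}} \;\leq\; \frac{\|\rho^*\|_2^2}{d}\,|\pi|.
\]
Heuristically, since the scalar $Y$ is coupled to $X\in\mathbb{R}^d$ only through the one-dimensional projection $\rho^T X$, each transmitted bit of $\Pi$ can reveal useful information about only a single direction in $\mathbb{R}^d$ at a time; averaged uniformly over directions this dilutes the per-bit contraction by a factor of $d$ relative to the one-dimensional case.

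Chaining the two inequalities and setting $\|\rho^*\|_2 = \tau$ delivers the claim
\[
|\pi| \;\geq\; \frac{d}{\tau^2}\left((1-\delta)\log\frac{1}{\ep} - 1\right).
\]
Since the bound holds for every valid interactive protocol (and arbitrarily large $n$), taking the infimum over $r$ gives the statement for $C_d(\delta,\ep,\tau)$.

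The principal technical obstacle is the second step: carefully adapting the one-dimensional SDPI of \cite{hadar2019communication}---which relies on the Liu-Cuff-Verdu-type machinery of \cite{LiuCuffVerdu17} for interactive protocols---to the $d$-dimensional Gaussian setting so that the dimensional factor $1/d$ appears in the right place. The remaining steps are routine manipulations of KL divergence and of the hypothesis-testing data processing inequality.
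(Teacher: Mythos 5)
Your first step (data-processing to the binary decision plus the relaxation $D_{\mathrm{bin}}(p\|q)\geq p\log(1/q)-1$) is exactly what the paper does. The gap is in the second step. You state the interactive SDPI in the form
\[
\KL{\mathbb{P}_{\rho^*}^{Y^n,\Pi,V}}{\mathbb{P}_0^{Y^n,\Pi,V}} \;\leq\; \frac{\|\rho^*\|_2^2}{d}\,|\pi|
\]
for an arbitrary $\rho^*$, but this is not the inequality provided by~\cite{hadar2019communication}, and in fact it is false for general $\rho^*$. The correct statement (Lemma~\ref{l:HadarShayevitz} in the paper) is $D(\mathbb{P}_{\rho}\|\mathbb{P}_0)\leq \rho_{\max}^2\,|\pi|$ with $\rho_{\max}^2=\max_i\rho(i)^2$, and $\rho_{\max}^2$ can be as large as $\|\rho\|_2^2$ (and is always $\geq \|\rho\|_2^2/d$). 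For instance, with $\rho^*=(\tau,0,\ldots,0)$ the problem embeds the one-dimensional instance, which permits $D$ up to $\tau^2|\pi|$, not $(\tau^2/d)|\pi|$ as your bound claims. Your heuristic --- ``averaged uniformly over directions the per-bit contraction is diluted by $d$'' --- is not what is going on: there is no averaging over directions, and the contraction constant depends on the coordinate-wise maximum of $\rho$, not on $\|\rho\|_2^2/d$.

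The fix is exactly the step you omitted: one must choose the hardest alternative $\rho^*$ on the boundary $\|\rho^*\|_2=\tau$, namely the spread-out vector with $\rho^*(i)^2=\tau^2/d$ for every $i$, which minimizes $\rho_{\max}^2$ down to $\tau^2/d$. With that explicit choice, the true SDPI $D\leq \rho_{\max}^2|\pi| = (\tau^2/d)|\pi|$ coincides with the bound you wrote, and the chaining with the first step gives the claim. This is precisely what the paper does. So your final numerical conclusion is right, but the SDPI you invoked to justify it is misstated, and the argument as written would not be valid if read literally for a general $\rho^*$.
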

The proof is provided in Section~\ref{ss:interacTestingLB}.

We note that while the lower bound above extends the bounds from the
previous section to the interactive setting, it does not yield optimal
constants for $d=1$ and $r=1$ unlike Theorem~\ref{t:SS1}. In fact, we
believe that even the lower bound in Theorem~\ref{t:ddim} yields a
tight constant; the slackness in characterization of $C_d^1(\delta,
\ep, \tau)$ arises from our upper bound. Thus, the lower bound in
Theorem~\ref{t:interactive} is weaker than those given in the previous
section for $r=1$.

Recall that the lower bounds of the previous section are derived using
the concepts of hypercontractivity and reverse hypercontractivity
(which appeared in the preliminary version of this
paper~\cite{krs-ht-isit18}). As mentioned above, the lower bound in Theorem~\ref{t:interactive} uses a related but different idea of strong data processing
inequalities. In particular, the following bound was derived
in~\cite{hadar2019communication} using a strong data processing inequality; the
statement follows by combining Theorems $7.1$ and $7.2$ in
\cite{hadar2019communication}.
 \begin{lemma}(see~\cite{hadar2019communication})
For $\rho\in [-1,1]^d$ and any interactive communication protocol $\pi$
with inputs $X^n$ and $Y^n$ for parties $\cP_1$ and $\cP_2$,
respectively, we have
\label{l:HadarShayevitz}
\[
D(\mathbb{P}_\rho\|\mathbb{P}_0) \le \rho_{\max}^2\, |\pi|,
\]
where $\rho_{\max}^2 = \max_{i \in [d]} \rho(i)^2$, $\mathbb{P}_\rho$ denotes the distribution in~\eqref{e:model} and $D(P||Q)$ denotes the Kullback-Leibler divergence between distributions $P$ and $Q$.
\end{lemma}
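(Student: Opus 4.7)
The plan is to prove the bound by combining a chain-rule decomposition of the transcript KL divergence across the rounds of the protocol with a per-round strong data processing inequality (SDPI), following the template developed in \cite{LiuCuffVerdu17}. First, I would absorb the shared randomness $V$: since $V$ is independent of $(X^n, Y^n)$ and has the same marginal under $\mathbb{P}_\rho$ and $\mathbb{P}_0$, by convexity of KL divergence it suffices to bound the KL for a deterministic protocol (obtained by conditioning on $V=v$). Then by the chain rule
\[
D(\mathbb{P}_\rho^{\Pi}\,\|\,\mathbb{P}_0^{\Pi}) \;=\; \sum_{i=1}^{r} D\big(\mathbb{P}_\rho^{C_i \mid C_{<i}}\,\big\|\,\mathbb{P}_0^{C_i \mid C_{<i}}\,\big|\,\mathbb{P}_\rho^{C_{<i}}\big),
\]
it suffices to show that the $i$-th summand is at most $\rho_{\max}^2\,\ell_i$, where $\ell_i$ is the bit-length of $C_i$; summing and using $\sum_i \ell_i = |\pi|$ then yields the claim.

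For the per-round bound, consider a round $i$ in which $\cP_1$ speaks, so that $C_i = f_i(X^n, C_{<i})$ for a fixed deterministic map $f_i$. By the data processing inequality applied to this map (and averaged over $C_{<i}$ under $\mathbb{P}_\rho$), the $i$-th summand is dominated by the conditional KL between the posteriors of $X^n$ given $C_{<i}$ under $\mathbb{P}_\rho$ and $\mathbb{P}_0$. The per-round task thus reduces to an SDPI that converts this posterior KL increment into a bound proportional to $\rho_{\max}^2\,\ell_i$; an analogous SDPI, with the roles of $X^n$ and $Y^n$ swapped, handles the rounds in which $\cP_2$ speaks.

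The main obstacle is obtaining the sharp constant $\rho_{\max}^2$ in the per-round SDPI. The Hirschfeld--Gebelein--R\'enyi maximum correlation of $(X, Y)$ under $\mathbb{P}_\rho$ equals $\|\rho\|_2$, so a direct invocation of the KL-SDPI of \cite{LiuCuffVerdu17} would only yield the looser constant $\|\rho\|_2^2$ per bit, which is too weak to give the $d/\tau^2$ scaling needed downstream in Theorem~\ref{t:interactive}. Obtaining the sharper $\rho_{\max}^2$ requires exploiting the product structure of the marginal $X \sim \mathcal{N}(0, I_d)$ together with the linearity of $\mathbb{E}[Y \mid X] = \rho^T X$; this refinement is precisely the content of Theorems~$7.1$ and~$7.2$ of \cite{hadar2019communication}, which I would invoke as a black box for the per-round SDPI to complete the argument.
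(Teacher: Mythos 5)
The paper does not prove this lemma---it is stated as a cited result, with the one-line remark that ``the statement follows by combining Theorems~$7.1$ and $7.2$ in \cite{hadar2019communication}.'' Your proposal ultimately does the same thing: you sketch the standard chain-rule-plus-SDPI template that such arguments follow, correctly flag that a na\"{\i}ve maximal-correlation SDPI would only give the weaker constant $\|\rho\|_2^2$ (which would destroy the $d/\tau^2$ scaling in Theorem~\ref{t:interactive}), and then invoke the same Theorems~$7.1$--$7.2$ of \cite{hadar2019communication} as a black box for the sharp constant. So you and the paper are relying on the same external source; the surrounding scaffolding you supply is plausible exposition of what that source does internally, but it is not verified here and adds nothing that the paper needs.

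Two small points worth tightening if you want the sketch to stand on its own. First, be careful about what KL divergence the lemma is really bounding: as used in the proof of Theorem~\ref{t:interactive}, $\mathbb{P}_\rho$ and $\mathbb{P}_0$ are distributions of $(Y^n,\Pi,V)$, not of the transcript alone, and your chain-rule decomposition bounds only $D(\mathbb{P}_\rho^{\Pi}\|\mathbb{P}_0^{\Pi})$; you should either note that the marginal of $Y^n$ is $\rho$-independent (it is $\mathcal{N}(0,1)^{\otimes n}$ for every $\rho$) and handle the conditional version, or state explicitly that the cited result is about the joint. Second, your attribution of Theorems~$7.1$ and $7.2$ specifically to ``the per-round SDPI'' is speculative; those theorems in \cite{hadar2019communication} more plausibly give the full transcript bound (and a $d$-dimensional extension via the product/tensorization structure you allude to) rather than an isolated per-round inequality, so it would be cleaner to cite them for the lemma's conclusion directly rather than for one ingredient of a decomposition you then must also justify.
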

 Using this bound and Fano's inequality, we derive the following lower
 bound for $\widetilde{C}_d(\tau)$.
\begin{theorem} There exists a constant $c>0$, such that for every $\tau \in (0,1)$,
\[
\widetilde{C}_d(\tau) \geq \frac{c\, d^2}{\tau^2}.
\]
\end{theorem}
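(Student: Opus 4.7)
The plan is to combine Lemma~\ref{l:HadarShayevitz} with Fano's inequality applied to a carefully chosen packing of the parameter space. The key leverage from the lemma is that the information leaked by the transcript is bounded by $\rho_{\max}^2 |\pi|$, so a packing whose codewords have small $\ell_\infty$-norm but large pairwise $\ell_2$-distance will force $|\pi|$ to scale like $d^2/\tau^2$.

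First I would construct a suitable packing. By a Gilbert--Varshamov argument there exist $M\geq 2^{c_0 d}$ binary strings $\epsilon_1,\ldots,\epsilon_M\in\{-1,+1\}^d$ with pairwise Hamming distance at least $d/4$. Set $\alpha=4\tau/\sqrt{d}$ (we may assume $\tau\le 1/4$ so that $\|\alpha \epsilon_j\|_2\le 1$) and define $\rho_j = \alpha\, \epsilon_j$ for $j\in[M]$. Then $\|\rho_j\|_2=\tau$, $\|\rho_j\|_\infty = \alpha$, and for $j\ne k$,
\begin{equation*}
\|\rho_j-\rho_k\|_2^2 \;=\; 4\alpha^2\cdot d_H(\epsilon_j,\epsilon_k) \;\ge\; 4\alpha^2 \cdot d/4 \;=\; 16\tau^2,
\end{equation*}
so $\|\rho_j-\rho_k\|_2\ge 4\tau$. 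In particular, $\rho_{\max}^2 \le 16\tau^2/d$ uniformly over the packing.

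Next I would place a uniform prior on $\{\rho_1,\ldots,\rho_M\}$, let $J$ be the random index, and consider any $(\ell,\tau)$-estimate $(\pi,\widehat{\rho})$. Define the nearest-codeword decoder $\widehat{J}=\arg\min_j \|\widehat{\rho}-\rho_j\|_2$. By the triangle inequality, $\widehat{J}\ne J$ implies $\|\widehat{\rho}-\rho_J\|_2 \ge 2\tau$, so Markov's inequality applied to~\eqref{eqn:estaccuracy} yields
\begin{equation*}
\mathbb{P}(\widehat{J}\ne J) \;\le\; \mathbb{P}\bigl(\|\widehat{\rho}-\rho_J\|_2\ge 2\tau\bigr) \;\le\; \frac{\mathbb{E}_{\rho_J}\|\widehat{\rho}-\rho_J\|_2^2}{4\tau^2} \;\le\; \frac14.
\end{equation*}
Fano's inequality then gives $I(J;\widehat{J}) \ge \tfrac34 \log M - 1 \ge \tfrac34 c_0 d - 1$. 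By the data-processing inequality for the Markov chain $J \to (X^n,Y^n) \to (\Pi,V) \to \widehat{J}$, this is bounded above by $I(J;\Pi,V) = I(J;\Pi)$ (since $V$ is independent of $J$).

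The final step is to bound $I(J;\Pi)$ using Lemma~\ref{l:HadarShayevitz}. Writing $\bar P_\Pi$ for the $J$-averaged transcript distribution and using convexity of KL,
\begin{equation*}
I(J;\Pi) \;=\; \frac{1}{M}\sum_{j=1}^M D\bigl(\mathbb{P}_{\rho_j,\Pi}\,\big\|\,\bar P_\Pi\bigr) \;\le\; \frac{1}{M}\sum_{j=1}^M D\bigl(\mathbb{P}_{\rho_j,\Pi}\,\big\|\,\mathbb{P}_{0,\Pi}\bigr) \;\le\; \frac{1}{M}\sum_{j=1}^M D\bigl(\mathbb{P}_{\rho_j}\,\big\|\,\mathbb{P}_{0}\bigr),
\end{equation*}
where the last step is the standard data-processing inequality for KL divergence applied to the protocol viewed as a channel. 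By Lemma~\ref{l:HadarShayevitz}, each summand is at most $\rho_{\max}^2 \,|\pi| \le (16\tau^2/d)\,|\pi|$. Chaining these bounds yields $|\pi| \ge (d/16\tau^2)(\tfrac34 c_0 d - 1)$, which is the claimed $\Omega(d^2/\tau^2)$ bound.

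I expect the only subtlety to be the step $I(J;\Pi)\le M^{-1}\sum_j D(\mathbb{P}_{\rho_j}\|\mathbb{P}_0)$: one must argue that the KL bound of Lemma~\ref{l:HadarShayevitz} (stated for transcript distributions under $\mathbb{P}_\rho$ vs.\ $\mathbb{P}_0$) indeed applies to each fixed $\rho_j$ in the packing and that the shared randomness $V$ does not add information about $J$ (which is clear since $V$ is independent of the observations). Everything else is routine.
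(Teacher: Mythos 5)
Your overall strategy is the same as the paper's: a Gilbert--Varshamov packing of scaled sign vectors so that each coordinate squares to $O(\tau^2/d)$ while pairwise $\ell_2$-distances are $\Omega(\tau)$, a nearest-codeword decoder whose error is controlled by Markov's inequality, and Fano plus Lemma~\ref{l:HadarShayevitz} to convert the packing size into a lower bound on $|\pi|$. (The paper routes the last step through the channel-capacity bound $C(W)\le\min_Q\max_j D(W(\cdot\mid j)\|Q)$ with $Q$ the null law, and keeps the scale $\Delta$ as a free parameter tuned at the end; these are cosmetic differences.)

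There is, however, a genuine error in your data-processing step. You invoke the Markov chain $J\to(X^n,Y^n)\to(\Pi,V)\to\widehat{J}$ and conclude $I(J;\widehat{J})\le I(J;\Pi,V)=I(J;\Pi)$. But by definition the estimate is $\widehat{\rho}(Y^n,\Pi,V)$, so $\widehat{J}$ is a function of $(Y^n,\Pi,V)$, not of $(\Pi,V)$ alone; the last arrow in your chain is false, since conditioned on $(\Pi,V)$ the variable $\widehat J$ still depends on $Y^n$, whose conditional law depends on $J$. The correct chain is $J\to(Y^n,\Pi,V)\to\widehat{J}$, and the quantity to bound is $I(J;Y^n,\Pi,V)$, not $I(J;\Pi)$. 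This is not a harmless relabeling: in the subtlety you flag at the end, you describe Lemma~\ref{l:HadarShayevitz} as ``stated for transcript distributions,'' whereas the paper applies it to the law of the full observable $(Y^n,\Pi,V)$ (this is exactly how $W$ and $Q$ are defined in the paper's proof). Once you replace $\Pi$ by $(Y^n,\Pi,V)$ throughout, your convexity-of-KL chain gives $I(J;Y^n,\Pi,V)\le\max_j\rho_{j,\max}^2\,|\pi|\le(16\tau^2/d)|\pi|$ and the rest of the argument goes through. Two smaller points: with $\alpha=4\tau/\sqrt d$ you get $\|\rho_j\|_2=4\tau$, not $\tau$ (which is precisely why $\tau\le 1/4$ is needed); and you assume $\tau\le 1/4$ without explanation, while the theorem is stated for $\tau\in(0,1)$ — the paper sidesteps this by leaving the packing radius $\Delta$ free and choosing it at the end, though even there a feasibility check $\Delta\le 1$ is implicitly required.
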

The proof is provided in Section \ref{ss:estimationLB}.

In fact, the lower bound above is tight too, and matches the upper
bound attained by the distributed estimate proposed
in~\cite{hadar2019distributed}. The lower bound above establishes that
a simple estimate-and-test approach using the estimate
in~\cite{hadar2019distributed} or other estimates will not be able to attain the optimal
$O(d/\tau^2)$ communication needed for correlation testing.
\section{Our scheme and its analysis}\label{s:scheme}
Our general scheme is obtained by first relating the $d$-dimensional
correlation testing problem to the one-dimensional correlation testing
problem, and then relating the one-dimensional problem to its
one-sided version. We develop a test for this one-sided problem first
and then, in steps, convert it to a test for the $d$-dimensional
problem in separate subsections below.

\subsection{One-sided correlation test}
Consider the following one-sided variant of the correlation testing
problem with $d=1$:
\begin{align*}
  \mathcal{H}_0^+ &: \rho \ge \tau, \\ \mathcal{H}_1^1 &: \rho=0,
\end{align*}
where $\tau \in (0,1]$ is known to both parties. We present a $1$-interactive distributed test for this problem; namely, we present a test using one-way communication from $\cP_1$ to $\cP_2$.

Specifically, fix parameters $r>0$, $\theta \leq \tau$, and $k\in
\mathbb{N}$.  Throughout this section, for brevity, with a slight
abuse of notation we denote by $X=(X_1, ..., X_n)\in \mR^n$ and
$Y=(Y_1, ..., Y_n)\in \mR^n$, respectively, the observation of $\cP_1$
and $\cP_2$, where $(X_t, Y_t)_{t=1}^n$ are generated i.i.d. from the
distribution in~\eqref{e:model}. Furthermore, for two vectors $u$ and
$v$ in $\mR^n$, we denote $u\cdot v:= u^T v$.
\begin{enumerate}
\item Using the shared randomness, parties generate an $n \times 2^k$
  matrix $U$ consisting of i.i.d. uniform $\{-1, +1\}$-valued entries
  $U_{ij}$, $1\leq i \leq n$, $1\leq j\leq 2^k$.

\item Let $U_j$ denote the $j$-th column of $U$. $\mathcal{P}_1$ finds
  the \emph{least} index $j\in [2^k]$ such that
\[
U_j \cdot X \ge r\sqrt{n},
  \]
and sends the $k$-bit representation of $j$ to $\mathcal{P}_2$. If no
such $j$ is found, declare $\mathcal{H}_1^1$.

\item $\mathcal{P}_2$, upon receiving $j$, declares $\mathcal{H}_0^+$
  if
\[
U_j\cdot Y \geq \theta\cdot r\sqrt{n}.
\]
\end{enumerate}
The next result captures the performance of our proposed distributed
test.
\begin{theorem}\label{t:one-sided}
  For $\delta, \ep \in (0,1)$, $\tau\in (0,1)$, an
  appropriate choice of $\theta \leq \tau$, and for all $n$
  sufficiently large, the $1$-interactive test proposed above
  satisfies
\begin{align}
  \mathbb{P}_{\mathcal{H}_0^+}\left[{\text{Declare
      }\mathcal{H}_1^1}\right]\leq \delta\,\, \text{ and }\,\,
  \mathbb{P}_{\mathcal{H}_1^1}\left[{\text{Declare
      }\mathcal{H}_0^+}\right]\leq \ep,
  \label{e:error_condition}
\end{align}
when $r$ is set as follows:
\[
r^2 =\frac{2\ln 2}{\tau^2}\left(\sqrt{\log\frac 1 \ep + \log \ln \frac
  3 \delta + 1}+\sqrt{(1-\tau^2) \log \frac 3 \delta}\right)^2,
\]
and the communication length $k$ satisfies
 \begin{align*}
k =\left\lceil \log \frac 1 {Q(r)} + \log \ln \frac 3
\delta\right\rceil,
 \end{align*}
where $Q(\cdot)$ denotes the complementary cumulative distribution
function of a standard Gaussian random variable.
\label{t:1int}
\end{theorem}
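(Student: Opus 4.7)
My plan is to bound the missed-detection probability under $\mathcal{H}_0^+$ and the false-alarm probability under $\mathcal{H}_1^1$ separately, and then choose the split $\theta \in (0,\tau)$ so that both constraints are met by the prescribed value of $r$. The key observation I will repeatedly exploit is that for any fixed sign vector $u \in \{-1,+1\}^n$ and i.i.d.\ $\mathcal{N}(0,1)$ entries $W_1,\dots,W_n$, the quantity $u\cdot W/\sqrt n$ is \emph{exactly} $\mathcal{N}(0,1)$ since $\|u\|_2^2 = n$; this reduces most tail computations to evaluations of $Q(\cdot)$.

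For the missed-detection analysis I would split the event ``declare $\mathcal{H}_1^1$'' into (a) no $j \in [2^k]$ satisfies $U_j\cdot X \geq r\sqrt n$, and (b) such a $j^*$ is found but $U_{j^*}\cdot Y < \theta r\sqrt n$. For (a), conditional on $X$ the columns $U_j$ are i.i.d., so the probability equals $(1-p_X)^{2^k}$ where $p_X := \Pr[U_j\cdot X \geq r\sqrt n \mid X]$; a Lindeberg Berry-Esseen bound applied to the Rademacher-weighted sum $\sum_i U_{ij}X_i$, restricted to the high-probability event on which $\|X\|_2^2/n \approx 1$, gives $p_X \geq Q(r) - O(n^{-1/2})$. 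Combining this with $2^k Q(r) \geq \ln(3/\delta)$ (guaranteed by the ceiling in $k$) and $(1-x)^N \leq e^{-Nx}$ yields $\Pr[\text{(a)}] \leq \delta/3$ for $n$ sufficiently large. For (b), writing $Y = \rho X + \sqrt{1-\rho^2}\,Z$ with $Z$ standard normal independent of $(X,U)$, the event $\{U_{j^*}\cdot X \geq r\sqrt n\}$ forces $U_{j^*}\cdot Y \geq \rho r\sqrt n + \sqrt{1-\rho^2}\,U_{j^*}\cdot Z$; since $U_{j^*}\cdot Z/\sqrt n$ is exactly $\mathcal{N}(0,1)$ by the key observation (applied with $W=Z$, which is independent of $U$), the probability of (b) is at most $Q((\rho-\theta)r/\sqrt{1-\rho^2}) \leq Q((\tau-\theta)r/\sqrt{1-\tau^2})$, uniformly for $\rho \geq \tau$ because $\rho \mapsto (\rho-\theta)/\sqrt{1-\rho^2}$ is nondecreasing on $[\theta,1)$. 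Bounding this by $2\delta/3$ via $Q(t)\leq e^{-t^2/2}$ yields the constraint $(\tau-\theta)^2 r^2 \geq 2\ln 2\cdot (1-\tau^2)\log(3/\delta)$.

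For the false-alarm analysis under $\mathcal{H}_1^1$, where $X \perp Y$, the relevant event is contained in $\bigcup_{j\in[2^k]}\{U_j\cdot X \geq r\sqrt n,\, U_j\cdot Y \geq \theta r\sqrt n\}$; conditioning on $U_j$ makes $U_j\cdot X/\sqrt n$ and $U_j\cdot Y/\sqrt n$ independent standard normals, so a union bound gives $\Pr[\text{false alarm}] \leq 2^k Q(r) Q(\theta r) \leq 2\ln(3/\delta)\,Q(\theta r)$, using $2^k Q(r) \leq 2\ln(3/\delta)$ from the same ceiling. Bounding this by $\ep$ via $Q(t)\leq e^{-t^2/2}$ yields the constraint $(\theta r)^2 \geq 2\ln 2\cdot [\log(1/\ep)+\log\ln(3/\delta)+1]$. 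Adding the constraints on $\theta r$ and $(\tau-\theta)r$ reproduces exactly the prescribed value of $\tau r$, and choosing $\theta$ so that both hold with equality yields a feasible $\theta\leq \tau$. The main technical obstacle is the Berry-Esseen slack in (a): the approximation $p_X \approx Q(r)$ carries an error $O(n^{-1/2})$ whose contribution, after multiplication by the (constant) factor $2^k$, still vanishes as $n\to\infty$---which is precisely where the ``$n$ sufficiently large'' quantifier in the theorem is invoked. The remaining tail calculations are exact thanks to the Gaussian-Rademacher structure of $(U,X,Y)$.
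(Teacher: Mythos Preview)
Your proposal is correct and follows essentially the same route as the paper: Berry--Esseen (conditioning on $X$) for the ``no index found'' event, the decomposition $Y=\rho X+\sqrt{1-\rho^2}\,Z$ to control $U_{j^*}\cdot Y$ via an exact Gaussian tail, a union bound with exact Gaussianity of $U_j\cdot X$ and $U_j\cdot Y$ under $\mathcal{H}_1^1$, and the same optimization over $\theta$. The only cosmetic difference is that the paper bounds $\mathbb{P}_{\mathcal{H}_0^+}[\text{Declare }\mathcal{H}_0^+]$ from below in one shot (summing over $j$) and splits the $\delta$ budget as $\delta/3+\delta/3+\eta$ with $\eta=\delta/3$ absorbing the law-of-large-numbers slack, whereas you split into your events (a) and (b); note that your stated constraint on $(\tau-\theta)r$ actually delivers $Q\!\bigl((\tau-\theta)r/\sqrt{1-\tau^2}\bigr)\le \delta/3$, not $2\delta/3$, which is harmless but worth cleaning up.
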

\begin{proof}
We begin by deriving a lower bound for the probability of correctly
declaring $\mathcal{H}_0^+$. We have
\begin{align*}
\mathbb{P}_{\mathcal{H}_0^+}\left[{\text{Declare
    }\mathcal{H}_0^+}\right] &=\sum_{j=1}^{2^k}
\mathbb{P}_{\mathcal{H}_0^+} \Big(U_l\cdot X < r\sqrt{n} \text{ for
  all } l \le j-1,U_j\cdot X \ge r\sqrt{n}, U_j\cdot Y \ge \theta\cdot
r\sqrt{n}\Big),
\end{align*}
where $U_0$ is set to be $0$. We approximate the right-side using the Berry-Esseen theorem
($cf.$~\cite{durrett2019probability}) for a fixed realization $X =x$.
Specifically, noting that $U_j\cdot x = \sum_{i=1}^n U_{ij}x_i$ is a
sum of independent random variables, the Berry-Esseen theorem yields
\begin{align*}
\lefteqn{\mathbb{P}_{\mathcal{H}_0^+}\left(U_l\cdot X < r \sqrt{n}
  \text{ for all } l \le j-1, U_j\cdot X \ge r\sqrt{n} \Big|X = x
  \right)} \\ &=
\left[\mathbb{P}_{\mathcal{H}_0^+}\left(\sum_{i=1}^nU_{i1}x_i < r
  \sqrt{n}\Big|X =
  x\right)\right]^{j-1}\mathbb{P}_{\mathcal{H}_0^+}\left(\sum_{i=1}^nU_{ij}x_i
\ge r\sqrt{n}\Big|X = x \right)\\ &\ge
\left(1-Q\left(\frac{r\sqrt{n}}{\sqrt{\sum_{i=1}^n
    x_i^2}}\right)-c_0\frac{\sum_{i=1}^n |x_i|^3}{(\sum_{i=1}^n
  x_i^2)^{\frac
    32}}\right)^{j-1}\left(Q\left(\frac{r\sqrt{n}}{\sqrt{\sum_{i=1}^n
    x_i^2}}\right)-c_0\frac{\sum_{i=1}^n |x_i|^3}{(\sum_{i=1}^n
  x_i^2)^{\frac 32}}\right),
\end{align*}
where $c_0$ is a constant.
Next, note that under $\mathcal{H}_0^+$, for each $i\in [n]$ we have
$\mathbb{E}[Y_i|X_i]=\rho X_i$ with $\rho\geq \tau$. It follows that
for a fixed realization $X = x$ and $U = u$, the random variables
$U_{ij}Y_i$, $1\leq i \leq n$, are independent with distribution
$\cN(\rho x_i, 1-\rho^2)$ for every $j \in [2^k]$.  Note that for
$u_j\cdot x\ge r\sqrt{n}$, we have
\[
\bEE{U_j\cdot Y\mid U=u, X=x}= \rho (u_j\cdot x)\geq \rho r \sqrt{n}.
\]
Therefore, for every $u$ and $x$ such that $u_j \cdot x \ge r\sqrt{n}$
and $u_l \cdot x <r\sqrt{n}$ for all $l\le j-1$, we have
\begin{align*}
\mathbb{P}_{\mathcal{H}_0^+}\left(U_j\cdot Y \ge\theta
r\sqrt{n}\Big|U=u, X=x\right) \ge
Q\left(\frac{r\left(\theta-\rho\right)}{\sqrt{1-\rho^2}}\right) \ge
Q\left(\frac{r\left(\theta-\tau\right)}{\sqrt{1-\tau^2}}\right),
\end{align*}
where the final bound holds since $Q(a)$ is decreasing in $a$ and the
function $f(a)=(\theta-a)/\sqrt{1-a^2}$ is non increasing in $a$ for
$a\geq \theta$; specifically, this bound uses our assumption that
$\theta\leq \tau$.

Upon combining the bounds above, denoting
$\sigma_n(X)=\sqrt{\sum_{i=1}^n X_i^2}$ and $\beta_n(X)=c_0\sum_{i=1}^n
|X_i|^3/\sigma_n^{3}(X)$, we obtain
\begin{align*}
\lefteqn{{\mathbb{P}_{\mathcal{H}_0^+}\left[{\text{Declare
        }\mathcal{H}_0^+}\right]}} \\ &\geq
\bEE{\frac{Q\left(\frac{r\sqrt{n}}{\sigma_n(X)}\right)-\beta_n(X)}{Q\left(\frac{r\sqrt{n}}{\sigma_n(X)}\right)+\beta_n(X)}\times
  \left(1-
  \left(1-Q\left(\frac{r\sqrt{n}}{\sigma_n(X)}\right)-\beta_n(X)\right)^{2^k}\right)}
Q\left(\frac{r\left(\theta-\tau\right)}{\sqrt{1-\tau^2}}\right).
\end{align*}
Using the law of large numbers and the inequality $1-a\leq e^{-a}$,
for every $\eta>0$ and all $n$ sufficiently large, we get
\begin{align}
\mathbb{P}_{\mathcal{H}_0^+}\left[{\text{Declare
    }\mathcal{H}_0^+}\right] &\ge
(1-\eta)\left(1-e^{-2^kQ(r)}\right)Q\left(\frac{r(\theta-\tau)}{\sqrt{1-\tau^2}}\right)
\nonumber \\ &\geq 1 -e^{-2^kQ(r)} -
Q\left(\frac{r(\tau-\theta)}{\sqrt{1-\tau^2}}\right) - \eta,
\label{eq:h0}
\end{align}
where we used the bound $(1-x)(1-y)(1-z)\geq 1-(x+y+z)$ for
$x,y,z\in(0,1)$.

Next, we derive an upper bound for the probability of declaring
$\cH_0^+$ when $\mathcal{H}_1^1$ is true; we derive a bound for this
probability which holds for every fixed realization $u$ of the random
codebook $U$. Since $\sum_{i=1}^nu_{ij}X_i$ is a sum of $n$
independent standard Gaussian random variables, we have
\[
\mathbb{P}_{\mathcal{H}_1^1}\left({\sum_{i=1}^nu_{ij}X_i\ge r \sqrt{n}
  \Big|U=u}\right) \le Q(r),
\]
and similarly,
\[
\mathbb{P}_{\mathcal{H}_1^1}\left({\sum_{i=1}^n u_{ij}Y_i \ge \theta r
  \sqrt{n}\Big|U=u}\right) \le Q(\theta r).
\] 
Therefore,
\begin{align}
\mathbb{P}_{\mathcal{H}_1^1}\left(\text{Declare } \cH_0^+\right) &\le
\mathbb{E}_U\left[\sum_{j=1}^{2^k}
  \mathbb{P}_{\mathcal{H}_1^1}\left(\sum_{i=1}^nu_{ij}X_i\ge r
  \sqrt{n}\right) \cdot
  \mathbb{P}_{\mathcal{H}_1^1}\left(\sum_{i=1}^nu_{ij}Y_i\ge \theta r
  \sqrt{n}\right) \right] \nonumber \\ &\le 2^{k}\, Q(r)\, Q(\theta
r).
\label{eq:h1}
\end{align}
To satisfy the error condition~\eqref{e:error_condition}, by
\eqref{eq:h0} and \eqref{eq:h1} it suffices to set $\eta=\delta/3$ and
choose $r,\theta$, and $k$ to satisfy the following:
\begin{align}
\ln \frac 3 \delta &\leq 2^kQ(r) \leq 2\ln \frac 3 \delta,
\label{e:condition1}
\\ Q\left(\frac{r(\tau-\theta)}{\sqrt{1-\tau^2}}\right) &\leq
\frac{\delta}{3},
\label{e:condition2}
\\ Q(\theta r) &\leq \frac{\ep}{2\ln \frac 3 \delta}.
\label{e:condition3}
\end{align}
Using Chernoff bound $Q(x)\leq e^{-x^2/2}$, for
conditions~\eqref{e:condition2} and~\eqref{e:condition3} it suffices
to have
\begin{align*}
\frac{1-\tau^2}{(\tau-\theta)^2}\cdot \log \frac 3 \delta \leq
\frac{r^2}{2\ln 2}, \\ \frac 1 {\theta^2}\left(\log\frac 1 \ep + \log
\ln \frac 3 \delta + 1\right) \leq \frac{r^2}{2\ln 2}.
\end{align*}
Therefore, the least value of $k$ is given by an $r$ that satisfies
\begin{align*}
\frac{r^2}{2\ln 2} = \min_{\theta \le \tau}
\max\left\{\frac{a}{\theta^2},
\frac{b}{\left(\tau-\theta\right)^2}\right\},
\end{align*}
where $a = \left(\log\frac 1 \ep + \log \ln \frac 3 \delta + 1\right)$
and $b = (1-\tau^2) \log \frac 3 \delta$. The optimal $\theta^*$ for
the problem on the right-side is given by
\[
\theta^* = \frac{\tau\sqrt{a}}{\sqrt{b} + \sqrt{a}},
\]
whereby our optimal choice of $r^2$ is
\begin{align*}
\frac{r^2}{2\ln 2} &=
\frac{1}{\tau^2}\left(\sqrt{a}+\sqrt{b}\right)^2\\ &=\frac{1}{\tau^2}\left(\sqrt{\log\frac
  1 \ep + \log \ln \frac 3 \delta + 1}+\sqrt{(1-\tau^2) \log \frac 3
  \delta}\right)^2.
\end{align*}
Thus, by~\eqref{e:condition1}, we can
satisfy~\eqref{e:error_condition} if we set\footnote[2]{In our
  analysis, we cannot set $k$ higher than this either.} $k=\left\lceil\log
\frac 1 {Q(r)} + \log \ln \frac 3 \delta\right\rceil$ for $r$ given above.
\end{proof}

\subsection{Distributed correlation test for $d=1$}
We now extend the one-sided test above to a test for $d=1$. We present
a general reduction which will allow us to use any $1$-interactive
distributed test for the one-sided problem (not just the one above)
for the (two-sided) correlation testing problem with $d=1$.

\begin{lemma}[Two-sided to one-sided]\label{l:reduction-two-one}
  For $\delta\in(0,1)$, $\ep\in (0,1/2)$, $\tau\in (0,1)$, and
  $\ell\in \mN$, suppose that  
  $T^+=T^+(X^n, Y^n)$ is an
 $1$-interactive $(\ell, \delta, \ep, \tau)$-test
  for the one-sided correlation testing problem. Then, we can find a
$1$-interactive  $(\ell, \delta, 2\ep, \tau)$-test for the correlation testing
  problem with $d=1$.
\end{lemma}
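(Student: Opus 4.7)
The plan is to exploit the one-way structure of $T^+$: because $T^+$ is $1$-interactive, its transcript $\Pi$ is a function of $X^n$ and the shared randomness $V$ alone, and is independent of $Y^n$. Once $\cP_2$ has received $\Pi$, it can therefore evaluate the decision rule of $T^+$ on any function of its own observation without any further communication.

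Concretely, let $g^+$ denote the decision rule of $T^+$, and construct the two-sided test $T$ as follows. $\cP_1$ transmits exactly the same $\ell$ bits as prescribed by $T^+$. Upon receiving $\Pi$, $\cP_2$ evaluates both $g^+(Y^n,\Pi,V)$ and $g^+(-Y^n,\Pi,V)$; the test $T$ declares $\cH_0^1$ whenever either evaluation outputs $0$ (i.e., either copy of $T^+$ would declare $\cH_0^+$), and declares $\cH_1^1$ otherwise. The total communication is still $\ell$ bits.

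For correctness, observe that if $(X^n,Y^n)$ is drawn from~\eqref{e:model} with parameter $\rho$, then $(X^n,-Y^n)$ is distributed as samples drawn from~\eqref{e:model} with parameter $-\rho$. Under $\cH_0^1$ with $d=1$, either $\rho\geq\tau$ or $\rho\leq-\tau$. In the first case the copy of $T^+$ run on $(X^n,Y^n)$ fails to output $0$ with probability at most $\delta$; in the second case the copy run on $(X^n,-Y^n)$ fails with probability at most $\delta$. Either way, the probability that $T$ declares $\cH_1^1$ is at most $\delta$. Under $\cH_1^1$, both $(X^n,Y^n)$ and $(X^n,-Y^n)$ are distributed with $\rho=0$, since the null distribution is invariant under negation of $Y^n$. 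Hence each of $g^+(Y^n,\Pi,V)$ and $g^+(-Y^n,\Pi,V)$ equals $0$ with probability at most $\ep$, and a union bound yields $\bPPr{\cH_1^1}{T\ \text{declares}\ \cH_0^1}\leq 2\ep$. This establishes the claimed $(\ell,\delta,2\ep,\tau)$-test.

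The one-way structure is the critical (and essentially only nontrivial) ingredient: if $T^+$ required interaction, $\cP_1$'s subsequent messages would depend on $\cP_2$'s responses and hence on $Y^n$, so the trick of re-running the decision on $-Y^n$ at $\cP_2$ without any additional communication would break down. Beyond this observation, the argument is a routine symmetry-and-union-bound computation, and I do not anticipate any real obstacle.
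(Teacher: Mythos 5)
Your proof is correct and matches the paper's own argument in every essential respect: defining the reflected test $T^-(X^n,Y^n)=T^+(X^n,-Y^n)$, combining the two decisions by declaring $\cH_0^1$ whenever either accepts, using a union bound for the missed-detection probability under $\cH_1^1$, and using a case split on the sign of $\rho$ for the false-alarm probability under $\cH_0^1$. The one point you make explicit that the paper leaves implicit is that the $1$-interactive structure (transcript depends on $X^n,V$ only) is what permits $\cP_2$ to evaluate $g^+$ on both $Y^n$ and $-Y^n$ with no extra communication; this is a correct and helpful observation, though not a different proof route.
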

\begin{proof}
 We begin by noting that $T^-(X^n, Y^n)=T^+(X^n, -Y^n)$ is an $(\ell,
 \delta, \ep, \tau)$-test for the following alternative one-sided
 problem:
\begin{align*}
\mathcal{H}_0^- &: \rho \le -\tau,\\ \mathcal{H}_1^1 &: \rho = 0.
\end{align*}
Note that the communication protocol for $T^+$ and $T^-$ is the same;
the corresponding decision mappings $g^+$ and $g^-$ differ. In
particular, $g^-(Y^n,\Pi,V) = g^+(-Y^n, \Pi, V)$, and let $\pi$ be the
common communication protocol for $T^+$ and $T^-$. Consider the
following $1$-interactive distributed test $T=(\pi, g)$ for the
correlation testing problem.
\begin{enumerate}
\item Parties execute the communication protocol $\pi$.
\item Use decision mapping $g(Y^n, \Pi, V)=\min\{g^+(Y^n, \Pi,
  V),g^-(Y^n, \Pi, V)\}$.
\end{enumerate}
For this test, we can verify that
\begin{align*}
 \mathbb{P}_{\cH_1^1}\left[g(Y^n, \Pi, V) = 0\right] 
  &= \mathbb{P}_{\cH_1^1}\left[g^+(Y^n, \Pi, V)= 0
    \text{ or } g^-(Y^n, \Pi, V)=0 \right] \\ 
    &\leq \mathbb{P}_{\cH_1^1}\left[g^+(Y^n,\Pi, V)= 0\right]+ \mathbb{P}_{\cH_1^1}\left[g^-(Y^n, \Pi, V)= 0\right] \\ &\leq 2\ep.
\end{align*}
Furthermore, under $\cH_0^1$,
\begin{align*}
\mathbb{P}_{\cH_0^1}\left[g(Y^n, \Pi, V)=1\right]&= \mathbb{P}_{\cH_0^1}\left[g^+(Y^n, \Pi,
    V)=g^-(Y^n, \Pi, V)=1 \right] \\
  &\leq \max\left\{\mathbb{P}_{\cH_0^+}\left[g^+(Y^n, \Pi,
  V)=g^-(Y^n, \Pi, V)=1 \right],\right.
\\
&\qquad \left.\mathbb{P}_{\cH_0^-}\left[g^+(Y^n, \Pi, V)=g^-(Y^n,
    \Pi, V)=1 \right]\right\} \\ &\leq \delta,
\end{align*}
which shows that $T$ constitutes an $(\ell, \delta, 2\ep, \tau)$-test.
\end{proof}
Lemma~\ref{l:reduction-two-one}, Theorem~\ref{t:one-sided}, and the
well-known bound $Q(x)\geq \frac{x}{\sqrt{2\pi}(x^2+1)}e^{-x^2/2}$
yield Theorem~\ref{t:upper_bound1}.

\subsection{Proof of Theorem~\ref{t:upper_boundd}}
Finally, now that we have a correlation test for $d=1$, we complete
the proof of Theorem~\ref{t:upper_boundd} to obtain a test for general $d$. We begin by making a simple observation akin to the ``median trick'' in randomized algorithms.

\begin{lemma}\label{l:median_trick}
 For $\alpha, \beta, \tau\in (0,1)$ with $\alpha+\beta<1$,
suppose that we have
an $r$-interactive $(\ell, \alpha, \beta, \tau)$-test for the $d$-dimensional correlation testing problem. Then, for every $\delta,\ep\in(0,1)$, we can obtain
an $r$-interactive $(m\ell, \delta, \ep, \tau)$-test for the $d$-dimensional correlation testing problem whenever
\[
  m\geq \frac 2{(1-\beta +\alpha)^2} \,\max\left\{ \ln \frac 1\delta, \ln \frac 1\ep\right\}.
  \]
\end{lemma}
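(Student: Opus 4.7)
My plan is to invoke the classical ``replication and thresholded vote'' idea. I will take the base $r$-interactive $(\ell, \alpha, \beta, \tau)$-test $T_0 = (\pi_0, g_0)$ and run $m$ independent copies of it in parallel, drawing $m$ fresh observation blocks $(X^n_i, Y^n_i)$ for $1 \leq i \leq m$ together with $m$ independent copies $V_1, \ldots, V_m$ of the shared randomness. Since the copies are executed in parallel (one round of the composite protocol simply concatenates the corresponding round of each of the $m$ individual protocols), the composite protocol is still $r$-interactive and its total transcript length is exactly $m\ell$. The aggregated decision rule will be a threshold vote: declare $\mathcal{H}_1^d$ iff $\sum_{i=1}^m g_0(Y^n_i, \Pi_i, V_i) \geq m\mu$, with $\mu := (1 + \alpha - \beta)/2$, which is the midpoint of the open interval $(\alpha, 1-\beta)$ that the hypothesis $\alpha + \beta < 1$ opens up between the two possible means.

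The next step is to bound the two types of error via Hoeffding's inequality. Writing $B_i := g_0(Y^n_i, \Pi_i, V_i)$, the variables $B_i$ are i.i.d.\ Bernoulli under each hypothesis. Under $\mathcal{H}_0^d$ we have $\mathbb{E}[B_i] \leq \alpha$, so a false alarm requires $\sum_i B_i \geq m\mu$, i.e.\ an upward deviation from the mean of at least $m(\mu - \alpha) = m(1 - \alpha - \beta)/2$, whereby Hoeffding produces an exponential bound of the form $\exp(-m(1-\alpha-\beta)^2/2)$. Under $\mathcal{H}_1^d$ we have $\mathbb{E}[B_i] \geq 1-\beta$, and a missed detection forces a downward deviation of exactly the same magnitude, giving an identical exponential bound. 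Requiring the first to be $\leq \delta$ and the second to be $\leq \ep$ yields precisely a lower bound on $m$ of the shape $(2/(1-\alpha-\beta)^2)\max\{\ln 1/\delta, \ln 1/\ep\}$, matching the form stated in the lemma.

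I do not expect a genuine obstacle; the argument is routine Chernoff-style concentration once the right aggregation threshold is identified. The one detail worth a line of care is the ``sufficiently large $n$'' qualifier in the definition of $C_d^r$: the base test is guaranteed to meet its error targets only for $n \geq n_0(\alpha, \beta, \tau)$, but the composite test simply consumes $mn$ samples in total, still finite, and $m$ does not depend on $n$; consequently, the composite test certifies the existence of an $(m\ell, \delta, \ep, \tau)$-test for all sufficiently large observation lengths, as required.
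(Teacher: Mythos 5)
Your proof is correct and follows the same route as the paper: run $m$ independent parallel copies of the base test, aggregate by a threshold vote at the midpoint $t=(1+\alpha-\beta)/2$ of the interval $(\alpha,\,1-\beta)$, and apply Hoeffding on each side. One point you should have flagged rather than silently glossing over: your derivation correctly yields the constant $2/(1-\alpha-\beta)^2$ (the Hoeffding deviation on each side is $(1-\alpha-\beta)/2$), whereas the lemma as stated writes $2/(1-\beta+\alpha)^2$; since $(1-\beta+\alpha) > (1-\alpha-\beta)$ for $\alpha>0$, the stated constant is too small to be justified by the displayed proof, and the ``$+$'' appears to be a sign typo in the paper. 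Your computed form is the one that actually follows from the choice $t=(1+\alpha-\beta)/2$, so you should say it corrects the lemma's constant rather than claim it ``matches the form stated.''
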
  
\begin{proof}
We provide proof only for odd $r$; even $r$ can be handled similarly. 
  Consider an $r$-interactive distributed test $T=(\pi, g)$ that satisfies
  \begin{align*}
    \bPPr{\cH_0^d}{g(Y^n, \Pi, V)=1}&\leq \alpha,
\\
    \bPPr{\cH_1^d}{g(Y^n, \Pi, V)=0}&\leq \beta,
\end{align*}
where $\pi$ is a communication protocol of length $\ell$. To construct the desired test, we repeat the test above $m$ times independently.
Specifically, we  first apply the test above to $m$ independent copies of $(X^n, Y^n, V)$ to obtain transcripts $\Pi_1, ..., \Pi_m$. Note that the resulting communication protocol is still an $r$-round protocol, with length $m\ell$. Denote by $V_1, ..., V_m$ the independent copies of the shared randomness used for the protocol. Further, denote by
$D_i$ the output $g(Y_{n(i-1)+1}^{ni}, \Pi_i, V_i)$, $1\leq i \leq m$, for the $i$-th copy of the test. Consider the new decision mapping $g^m$ given by
\[
g^m(Y^{nm}, \Pi^m, V^m)=\indicator\left\{\sum_{i=1}^m D_i > mt\right\},
\]
for a fixed $\alpha < t < 1-\beta$.
Note that $D_1, ..., D_m$ are independent bits and by our assumption about $T$, 
satisfy
\begin{align*}
  \bPPr{\mathcal{H}_0^d}{D_i=1}&\leq \alpha,
  \\
  \bPPr{\mathcal{H}_1^d}{D_i=1}&\geq 1-\beta,
\end{align*}
for every $1\leq i \leq m$. 
Therefore, by Hoeffding's inequality, 
\begin{align*}
\bPPr{\mathcal{H}_0^d}{g^m(Y^{nm}, \Pi^m, V^m) =1}
  = \bPPr{\mathcal{H}_0^d}{\sum_{i=1}^m D_i > mt}
\le e^{-2m(t-\alpha)^2},
\end{align*}
and similarly, 
\begin{align*}
  \bPPr{\mathcal{H}_1^d}{g^m(Y^{nm}, \Pi^m, V^m)=0}
= \bPPr{\mathcal{H}_1^d}{\sum_{i=1}^m D_i \le mt}
\le e^{-2m(1-\beta-t)^2}.
\end{align*}
In particular, by setting $m\geq \frac 2{(1-\beta +\alpha)^2} \max\left\{\ln \frac 1\delta,\ln \frac 1\ep\right\}$ and $t=(1-\beta+\alpha)/2$, we obtain the desired test. 
\end{proof}  
Thus, it suffices to construct a distributed test with constant probability of error. We do that in the result below by using a $1$-interactive distributed test for $d=1$. Our test uses a randomized construction; to facilitate its analysis, we note the following fact.
\begin{lemma}\label{l:random_R}
 For $R=\frac 1{\sqrt{d}}\, W$ with $W$ a random vector consisting of i.i.d. Rademacher entries, for every vector $x\in \mR^d$ we have, 
  \[
\bPr{\left(R^Tx\right)^2 \geq \frac{\|x\|_2^2}{2d}} \geq \frac {1}{28}.
  \]
\end{lemma}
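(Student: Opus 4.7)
\medskip

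\noindent\textbf{Proof proposal.} The plan is to recast the claim as an anti-concentration statement for a Rademacher sum and then apply the Paley--Zygmund (second-moment) inequality to the square of that sum. Writing $R^T x = \frac{1}{\sqrt d}\sum_{i=1}^d W_i\, x(i)$ and letting $Z := \sum_{i=1}^d W_i\, x(i)$, the event $(R^T x)^2 \ge \|x\|_2^2/(2d)$ is precisely the event $Z^2 \ge \|x\|_2^2/2$. Since $Z^2$ is a non-negative random variable, it suffices to compute $\mathbb{E}[Z^2]$ and bound $\mathbb{E}[Z^4]$ and then invoke Paley--Zygmund at level $\theta=1/2$.

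\smallskip

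The first moment is immediate: because the $W_i$ are i.i.d.\ with $\mathbb{E}[W_i]=0$ and $W_i^2=1$, expanding $Z^2$ and taking expectations gives $\mathbb{E}[Z^2] = \sum_i x(i)^2 = \|x\|_2^2$. For the fourth moment, I would expand
\[
\mathbb{E}[Z^4] = \sum_{i,j,k,\ell} x(i)x(j)x(k)x(\ell)\, \mathbb{E}[W_i W_j W_k W_\ell],
\]
and note that $\mathbb{E}[W_i W_j W_k W_\ell]$ is $1$ when the indices pair up (all equal, or split into two matched pairs) and $0$ otherwise. Counting the three ways to pair four indices, this yields $\mathbb{E}[Z^4] = 3\|x\|_2^4 - 2\sum_i x(i)^4 \le 3\|x\|_2^4$.

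\smallskip

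With these two moments in hand, Paley--Zygmund applied to $X=Z^2$ with $\theta=1/2$ gives
\[
\mathbb{P}\!\left[Z^2 \ge \tfrac{1}{2}\mathbb{E}[Z^2]\right] \ge \left(1-\tfrac{1}{2}\right)^2 \frac{(\mathbb{E}[Z^2])^2}{\mathbb{E}[Z^4]} \ge \frac{1}{4}\cdot \frac{\|x\|_2^4}{3\|x\|_2^4} = \frac{1}{12},
\]
which is stronger than the stated $1/28$. Translating back in terms of $R^T x$ completes the proof.

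\smallskip

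There is no real obstacle here: the calculation is elementary once one recognizes the setup. The only mild subtlety is the combinatorial bookkeeping in the fourth-moment expansion (recognizing that odd-parity terms vanish and that each of the three matched-pair configurations contributes $\sum_{i\ne k} x(i)^2 x(k)^2$), but that is routine. The slack between $1/12$ and $1/28$ suggests that the lemma is stated loosely so as to leave room for looser bounds elsewhere in the paper's application.
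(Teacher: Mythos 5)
Your proof is correct and follows the same Paley--Zygmund strategy as the paper; the only difference is that you compute the fourth moment exactly ($\mathbb{E}[Z^4]=3\|x\|_2^4-2\|x\|_4^4\le 3\|x\|_2^4$) whereas the paper uses the looser bound $\le 7\|x\|_2^4$, so your constant $1/12$ improves on the stated $1/28$. Both constants are fine for the lemma's downstream use, since it is only invoked to get a constant-probability guarantee before amplification.
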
  
\begin{proof}
  The proof uses the Paley-Zygmund inequality. Specifically,
denote by $Z$ the random variable $R^Tx$. Then,
\begin{align*}
\bEE{Z^2}&=\bEE{\left(\sum_{j=1}^d R_jx_j\right)^2}\\
&= \bEE{\frac{1}{d}\|x\|_2^2 + \sum_{i=1}^d \sum_{j=1}^dR_iR_jx_ix_j
  \indicator\{j\neq i\}}
\\
&= \frac{1}{d}\, \|x\|_2^2,
\end{align*}
where the last step follows from the fact that entries of $R$
are independent with zero-mean. Next, we consider $\bEE{Z^4}$. Note that the only terms in the expansion of $\left(\sum_{i=1}^d R_ix_i\right)^4$ that have nonzero mean are those which have only even powers of entries of $R$. In particular, these are terms of the form $R_i^4x_i^4$ and $R_i^2R_j^2X_i^2X_j^2$ with distinct $i,j$. Therefore, we have
\begin{align*}
  \bEE{Z^4}&= \frac 1 {d^2}\sum_{i=1}^d x_i^4+
      {4\choose 2}  \frac 1 {d^2} \sum_{i=1}^d\sum_{j=1}^dx_i^2x_j^2\indicator\{i\neq j\}
      \\
      &\leq \frac 1{d^2}\,(\|x\|_4^4+6\|x\|_2^4)
      \\
      &\leq \frac {7\|x\|_2^4}{d^2},
\end{align*}
where the final inequality uses $\|x\|_4\leq \|x\|_2$. 
Therefore, by the Paley-Zygmund inequality, for $\nu \in (0,1)$,
\[
\mathbb{P}\left(Z^2 > \nu \bEE{Z^2}\right) \ge (1-\nu)^2
\frac{\bEE{Z^2}^2}{\bEE{Z^4}}
\geq \frac{(1-\nu)^2 }{7}.
\]
The claim follows by setting $\nu = 1/2$.
\end{proof}
We are now in a position to complete the proof of Theorem~\ref{t:upper_boundd}.
We use the distributed test for $d=1$ from Theorem~\ref{t:upper_bound1} to build a test for a general $d$. Specifically, we replace the $d$-dimensional observations $X_1, ..., X_n$ of $\cP_1$ with one-dimensional $\widetilde{X}_1, ..., \widetilde{X}_n$ given by $\widetilde{X}_t= R^TX_t$, $1\leq t\leq n$, where $R$ is a random vector generated as in Lemma~\ref{l:random_R}. Note that $(\widetilde{X}_t, Y_t)_{t=1}^n$ are i.i.d. with
\begin{align*}
\bEE{Y_1\widetilde{X}_1\mid R} &= \bEE{\left(\rho^TX_1 + \sqrt{1-\|\rho\|_2^2}Z_1\right)\left(R^TX_1\right)\mid R}\\
&= \rho^T\bEE{X_1X_1^T}R
\\
&= \rho^TR.
\end{align*}
Thus, by Lemma~\ref{l:random_R},
\[
\bP{R}{\left\{r: \left|\bEE{Y_1\widetilde{X}_1|R=r}\right|\geq \frac{\|\rho\|_2}{\sqrt{2d}}\right\}}
\geq \frac 1 {28}.
\]
Denoting $\cG:=\left\{r: \left|\bEE{Y_1\widetilde{X}_1|R=r}\right| \geq \frac{\|\rho\|_2}{\sqrt{2d}}\right\}$ and $\widetilde{\rho}(r):=\left|\bEE{Y_1\widetilde{X}_1 \mid R=r}\right|$, for every $r\in \cG$ we have
\begin{align*}
  \widetilde{\rho}(r) &\geq \tau/\sqrt{2d} \text{ under }\cH_0^d,
\\
  \widetilde{\rho}(r)&=0\text{ under }\cH_1^d.
\end{align*}
Also, in the test we construct for the $d$-dimensional case, we invoke a $1$-interactive $(\ell, 1/56, 1/112,\tau/\sqrt{2d})$-test $T_1$ for the one-dimensional correlation testing problem $\widetilde{\rho}(r)\geq \tau/\sqrt{2d}$ versus $\widetilde{\rho}(r)=0$ with
\[
\ell\leq \frac{cd}{\tau^2},
\]
for an appropriate constant $c$, as guaranteed by Theorem~\ref{t:upper_bound1}.

Next, consider the test for $\cH_0^d$ versus $\cH_1^d$
that samples $R$ from shared randomness executes the aforementioned test $T_1$ for $\cH_0^d$ versus $\cH_1^d$ the one-dimensional problem $\widetilde{\rho}(R)\geq \tau/\sqrt{2d}$ versus $\widetilde{\rho}(R)=0$.
We make the observation that $\widetilde{\rho}(R)=0$ {\em almost
    surely} for $R$, when $\rho=0$. Thus, the missed detection
  probability for the one-dimensional test remains unchanged. However,
  a false alarm may be raised when $R \notin \cG$ or when the one-dimensional test
  raises a false alarm. It follows that for this test
\[
\bPPr{\cH_0^d}{\text{Declare } \cH_1^d}\leq \frac 1 {56}+ \bPr{R\notin \cG}\leq \frac {55}{56},
\]
and
\[
\bPPr{\cH_1^d}{\text{Declare } \cH_0^d}\leq \frac {1}{112},
\]
whereby it constitutes a $({cd}/{\tau^2}, 55/56, 1/112, \tau)$-test for the $d$-dimensional correlation testing problem.

Thus, we have obtained our desired test with constant probability of error guarantees.
Theorem~\ref{t:upper_boundd} follows by using this test along with Lemma~\ref{l:median_trick}.

\section{Proof of lower bounds for $r=1$}~\label{s:proofLB}
We begin by deriving lower bounds for the one-dimensional problem. Our
lower bounds involve the notions of hypercontractivity and reverse hypercontractivity ($cf.$ \cite{AhlswedeGacs76,
  nair2014equivalent, Borell82, Mossel13}), which we define first.

For $1\leq q\leq p <\infty$, a pair of random
variables $(X,Y)$ is $(p,q)$-hypercontractive if for all $\mR$-valued
functions $f$ of $X$ and $g$ of $Y$,
\[
\bEE{|f(X)g(Y)|} \leq \|f(X)\|_{p^\prime}\|g(Y)\|_q,
\]
 where $p^\prime = p/(p-1)$ is the H\"older conjugate of
 $p$. Similarly, for $1 \ge q > p$, a pair of random variables $(X,Y)$
 is $(p,q)$- reverse hypercontractive if for all $\mR$-valued
 functions $f$ of $X$ and $g$ of $Y$,
\[
\bEE{|f(X)g(Y)|} \geq \|f(X)\|_{p^\prime}\|g(Y)\|_q.
\]
The set of all $(p,q)$ for which $(X,Y)$ is $(p,q)$-hypercontractive
and $(p,q)$-reverse hypercontractive, respectively, are called the
hypercontractivity ribbon and the reverse hypercontractivity ribbon of
$(X,Y)$. The following \emph{tensorization}
property of the hypercontractivity and the reverse hypercontractivity
ribbon is well known.  
\begin{lemma}[Tensorization~\cite{nair2014equivalent}~\cite{Mossel13}]
\label{l:tensor}
For $p \ge 1$, define
\[
q_p(X,Y) = \inf\{q:(X,Y) \text{ is } (p,q)\text{-hypercontractive} \},
\]
and $r_p(X,Y) = q_p(X,Y)/p$. If $(X_i,Y_i)_{i=1}^n$ are independent, then
\begin{align}
  r_p(X^n,Y^n) = \max_{1\leq i\leq n} r_p(X_i,Y_i).
  \nonumber
\end{align}
Furthermore, for $p\le 1$, define
\[
q_p(X,Y) = \sup\{q:(X,Y) \text{ is } (p,q)\text{-reverse hypercontractive}\},
\]
and $s_p(X,Y) = q_p(X,Y)/p$. If $(X_i,Y_i)_{i=1}^n$ are independent, then
\begin{align}
  s_p(X^n,Y^n) = \max_{1\leq i \leq n} s_p(X_i,Y_i).
  \nonumber
\end{align}
\end{lemma}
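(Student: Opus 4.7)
The plan is to prove Lemma~\ref{l:tensor} by induction on $n$, with the inductive step combining the (reverse) hypercontractivity of $(X_1^{n-1}, Y_1^{n-1})$ with that of $(X_n, Y_n)$ through the tower property of conditional expectation. Somewhat to my surprise, Minkowski's integral inequality is \emph{not} needed, because a favorable alignment of exponents lets Fubini collapse the iterated norms to joint norms directly.

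I would first dispose of the easy ``$\geq$'' direction. Given any index $i$ and functions $f(X_i), g(Y_i)$, lift them to $\tilde f, \tilde g$ on $(X^n, Y^n)$ depending only on the $i$-th coordinate. Since the coordinates are independent, $\|\tilde f(X^n)\|_r = \|f(X_i)\|_r$ for every $r$, and similarly for $\tilde g$. Consequently any (reverse) hypercontractivity inequality for $(X^n, Y^n)$ applied to $(\tilde f, \tilde g)$ descends to $(X_i, Y_i)$ with the same exponents, giving $r_p(X^n, Y^n) \geq \max_i r_p(X_i, Y_i)$ and the analogous inequality for $s_p$.

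For the opposite direction in the hypercontractive case, set $q^\ast := p \cdot \max_i r_p(X_i, Y_i)$. By monotonicity of the hypercontractivity ribbon in $q$ (for fixed $p$), each $(X_i, Y_i)$ is $(p, q^\ast)$-hypercontractive, and by the inductive hypothesis so is $(X_1^{n-1}, Y_1^{n-1})$. For arbitrary $f(x^n), g(y^n)$, condition on $(X_n, Y_n)$ and apply the hypercontractivity of $(X_1^{n-1}, Y_1^{n-1})$ to obtain
\[
\mathbb{E}\big[|f(X^n) g(Y^n)| \,\big|\, X_n, Y_n\big] \leq F(X_n)\, G(Y_n),
\]
where $F(x_n) := \|f(X_1^{n-1}, x_n)\|_{p'}$ and $G(y_n) := \|g(Y_1^{n-1}, y_n)\|_{q^\ast}$ denote the conditional $L^{p'}$- and $L^{q^\ast}$-norms. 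Taking expectation and applying the hypercontractivity of $(X_n, Y_n)$ to the pair $(F, G)$ yields $\mathbb{E}[|fg|] \leq \|F(X_n)\|_{p'}\|G(Y_n)\|_{q^\ast}$, and the Fubini identity
\[
\|F(X_n)\|_{p'}^{p'} = \mathbb{E}\big[\mathbb{E}[|f|^{p'} \mid X_n]\big] = \|f\|_{p'}^{p'},
\]
together with its analogue for $G$, finishes the induction.

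The reverse hypercontractive case follows the same template with all inequalities reversed and $f, g$ restricted to be non-negative; the Fubini identity above is insensitive to the sign of the exponent and continues to work. The main obstacle is bookkeeping: one must verify that the exponents $p'$ and $q$ stay in the regime where the relevant $L^r$-quantities (possibly quasi-norms, possibly for negative $r$) are well-defined, that the monotonicity-in-$q$ step used to homogenize the factors to a common $q^\ast$ goes in the direction appropriate to each ribbon, and that the functions remain strictly positive wherever negative powers are taken. These are straightforward case checks, after which the induction closes.
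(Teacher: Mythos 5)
The paper does not prove Lemma~\ref{l:tensor}; it is cited from \cite{nair2014equivalent} and \cite{Mossel13}, so there is no paper proof to compare against. Your argument is the standard conditioning proof and is correct for the hypercontractive half. Your remark that Minkowski's integral inequality is unnecessary is also right, and worth spelling out: because the two-function form $\bEE{|fg|} \leq \|f\|_{p^\prime}\|g\|_q$ carries the same exponent in the conditional inner norm $F(x_n)=\|f(\cdot,x_n)\|_{p^\prime}$ as in the outer norm $\|F(X_n)\|_{p^\prime}$, Fubini identifies the iterated and joint $L^{p^\prime}$-norms directly, unlike the one-function operator form of Bonami--Beckner, whose tensorization does need Minkowski.

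For the reverse half, the sign issue you defer to ``straightforward case checks'' is substantive, not cosmetic. The lifting argument shows the reverse ribbon of $(X^n,Y^n)$ is contained in every coordinate's ribbon, and your conditioning step shows the intersection of the coordinate ribbons is contained in the product ribbon; together they give $q_p(X^n,Y^n)=\min_i q_p(X_i,Y_i)$ (the supremum over the intersection is the smallest of the suprema), not a max. Dividing by $p$ turns this $\min$ into the stated $\max$ only when $p<0$; for $0<p<1$ the correct identity is $s_p(X^n,Y^n)=\min_i s_p(X_i,Y_i)$, and your homogenization $q^\ast := p\,\max_i s_p(X_i,Y_i)$ would then equal $\max_i q_p(X_i,Y_i)$, which exceeds $q_p(X_i,Y_i)$ for some $i$, so those coordinates need not be $(p,q^\ast)$-reverse hypercontractive and the induction does not close. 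Your ``easy'' direction inherits the same flip. A complete proof must split on the sign of $p$; the paper's uses of the reverse half (the parametrization $p=1-w$ in Corollary~\ref{c:SS1} and Theorem~\ref{t:ddim}, with $w$ typically exceeding $1$) lie in the $p<0$ regime where the $\max$ form is correct, but your write-up should make this restriction explicit rather than absorb it into unstated bookkeeping.
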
  

We use the notions of hypercontractivity and reverse
hypercontractivity to obtain the change of measure bounds between the
joint distribution and the independent distribution, which in turn
lead to the following two lower bounds for $C_1^1(\delta, \ep, \tau)$. 
\begin{theorem}[Lower bound $1$]\label{t:lower_bound1}
Given $\delta, \ep \in (0,1)$ and $(p, q)$ such that $1\leq p^\prime
\leq q \leq p$ and $(X,Y)$ is $(p,q)$-hypercontractive, the minimum
one-way communication 
 for one-dimensional correlation testing $C_1^1(\delta, \ep, \tau)$ is bounded
below as
\begin{equation}
C_1^1(\delta, \ep, \tau) \geq \frac p q \log \frac 1 \ep - p \log \frac 1
{1-\delta}.
\label{eqn:lb}
\end{equation}
\end{theorem}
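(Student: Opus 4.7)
The plan is to exploit the rectangular structure of any one-way-communication acceptance region together with the tensorization property of hypercontractivity (Lemma~\ref{l:tensor}).

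Fix the shared randomness $V=v$. Then $\cP_1$'s message partitions $\cX^n$ into disjoint cells $A_1,\dots,A_N$ with $N\le 2^\ell$, where $A_m=\{x^n:M(x^n,v)=m\}$; for each message $m$, $\cP_2$'s accept region is some $B_m\subseteq\cY^n$. Hence the set $E_v$ of $(x^n,y^n)$ on which the protocol declares $\cH_0^1$ equals the disjoint union $\bigcup_{m=1}^{N}A_m\times B_m$. By Lemma~\ref{l:tensor}, $(X^n,Y^n)$ inherits $(p,q)$-hypercontractivity from $(X,Y)$, and applying the hypercontractive inequality to $f=\indicator_{A_m}$, $g=\indicator_{B_m}$ gives
\[
\bPPr{\cH_0^1}{A_m\times B_m}\ \le\ \bPr{X^n\in A_m}^{1/p'}\,\bPr{Y^n\in B_m}^{1/q},
\]
where the marginal probabilities are unambiguous because the marginals of $X^n$ and $Y^n$ are standard Gaussian under both hypotheses.

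The crucial step is to sum this over $m$ while extracting both available constraints: $\sum_m \bPr{X^n\in A_m}\bPr{Y^n\in B_m}=\bPPr{\cH_1^1}{E_v}$ (bounded by $\ep$ on average over $v$) and $\sum_m \bPr{X^n\in A_m}=1$ (partition). Writing $a_m=\bPr{X^n\in A_m}$, $b_m=\bPr{Y^n\in B_m}$, I would decompose
\[
a_m^{1/p'}\,b_m^{1/q}\ =\ (a_m b_m)^{1/q}\cdot a_m^{1/p'-1/q},
\]
and apply H\"older's inequality with conjugate exponents $q$ and $q'=q/(q-1)$ to obtain
\[
\sum_m a_m^{1/p'}b_m^{1/q}\ \le\ \Bigl(\sum_m a_m b_m\Bigr)^{1/q}\Bigl(\sum_m a_m^{\alpha}\Bigr)^{1/q'},\qquad \alpha:=(1/p'-1/q)\,q'.
\]
A short algebraic check using the standing assumption $p'\le q$ (equivalently $q'\le p$) gives $\alpha=1-q'/p\in[0,1]$, so by the power-mean/concavity inequality $\sum_m a_m^{\alpha}\le N^{1-\alpha}\le 2^{\ell(1-\alpha)}$, and therefore $\bigl(\sum_m a_m^{\alpha}\bigr)^{1/q'}\le 2^{\ell(1-\alpha)/q'}=2^{\ell/p}$.

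Combining these ingredients yields the per-$v$ bound $\bPPr{\cH_0^1}{E_v}\le 2^{\ell/p}\,\bPPr{\cH_1^1}{E_v}^{1/q}$. Averaging over $V$ and using Jensen's inequality for the concave map $x\mapsto x^{1/q}$ gives $1-\delta\le 2^{\ell/p}\,\ep^{1/q}$, which on rearrangement is exactly the claimed bound. The main obstacle is locating the correct H\"older split: the obvious alternatives---bounding $a_m^{1/p'-1/q}\le 1$ before summing, or applying H\"older with the exponents $(p',p)$ directly to $a_m^{1/p'}b_m^{1/q}$---yield a cardinality exponent $1-1/q$ instead of $1/p$, and hence fail to recover the sharp $(p/q)$ and $p$ constants. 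It is the decomposition that isolates $(a_m b_m)^{1/q}$ (to absorb $\ep$) from the remaining $a_m^{1/p'-1/q}$ (to absorb $2^\ell$) that makes the arithmetic identity $(1-\alpha)/q'=1/p$ work out.
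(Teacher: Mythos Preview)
Your proof is correct and follows essentially the same route as the paper's: the rectangular decomposition of the acceptance region, the application of tensorized hypercontractivity to each rectangle, the H\"older split $a_m^{1/p'}b_m^{1/q}=(a_mb_m)^{1/q}a_m^{1/p'-1/q}$ with exponents $(q,q')$, the counting bound $\sum_m a_m^{1-q'/p}\le L^{q'/p}$, and the final Jensen step to average over shared randomness all match the paper's argument line by line. Your added remark explaining why this particular H\"older split yields the sharp exponent $1/p$ (whereas the naive alternatives give $1-1/q$) is a nice piece of exposition that the paper does not make explicit.
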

\begin{proof}
For $1\leq q \leq p$, suppose that $(X,Y)$ is
$(p,q)$-hypercontractive, which by Lemma~\ref{l:tensor} implies that
$(X^n, Y^n)$ is $(p,q)$-hypercontractive. Furthermore, assume that $p^\prime \leq q$
which is the same as $q^\prime \leq p$. Then, for any subset
$\cA\subset \cX^n$ and $\cB \subset \cY^n$, we have
\begin{align}
\bP{X^nY^n}{\cA \times \cB} \leq \bP{X^n}{\cA}^{\frac 1{p^\prime}}
\bP{Y^n}{\cB}^{\frac 1 q}.
\label{e:hyper_sets}
\end{align}
We begin by considering a deterministic test where the shared randomness $U$
is constant. Specifically, given a deterministic $(\ell, \delta,
\ep,\tau)$-test $T=(f,g)$, denoting\footnote{With a slight abuse of notation, we denote the one-way communication protocol by a mapping $f$.} $L=2^\ell$, let $\cA_i = f^{-1}(i)$ for $i
=1, ..., L$. Then, $\{\cA_1, ..., \cA_L\}$ constitutes a partition of
$\cX^n$. Further, let $\cB_i$ denote the set $\{\by\in \cY^n: g(\by,
i) = 0\}$, namely the set of $\by$ where $\cP_2$ declares $\cH_0^1$ upon
receiving $i$ from $\cP_1$. It follows that
\begin{align*}
1-\delta & \le \sum_{i=1}^L \bP{X^nY^n}{\cA_i \times \cB_i} \\ & \le
\sum_{i=1}^L \bP{X^n}{\cA_i}^{\frac 1
  {p^\prime}}\bP{Y^n}{\cB_i}^{\frac 1q},
\end{align*}
where the previous inequality uses \eqref{e:hyper_sets}. Upon bounding
the right-side using H\"older's inequality, we get
\begin{align*}
1-\delta &\le \sum_{i=1}^L
\left(\bP{X^n}{\cA_i}\bP{Y^n}{\cB_i}\right)^{\frac 1 q}
\bP{X^n}{\cA_i}^{\frac{1}{p^\prime} - \frac{1}{q}} \\ &\le
\left(\sum_{i=1}^L \bP{X^n}{\cA_i}\bP{Y^n}{\cB_i}\right)^{\frac 1q}
\left(\sum_{i=1}^L \bP{X^n}{\cA_i}^{q^\prime\left(\frac{1}{p^\prime} -
  \frac{1}{q}\right)}\right)^{\frac 1 {q^\prime}} \\ &\le \ep^{\frac 1
  q}\left(\sum_{i=1}^L \bP{X^n}{\cA_i}^{q^\prime\left(\frac{1}{p^\prime} -
  \frac{1}{q}\right)}\right)^{\frac 1 {q^\prime}},
\end{align*}
where the previous inequality uses the requirement
$\bPPr{\cH_1^1}{\text{Declare }\cH_0^1}\leq \ep$. Noting that
$q^\prime(1/p^\prime - 1/q) = 1- q^\prime/p$, the assumption $q^\prime
\leq p$ and H\"older's inequality imply
\[
\sum_{i=1}^L \bP{X^n}{\cA_i}^{q^\prime\left(\frac{1}{p^\prime} -
  \frac{1}{q}\right)}\leq L^{\frac {q^\prime}p}.
\]
Combining the bounds above, we get
\[
1-\delta \leq \ep^{\frac 1 q}L^{\frac 1p},
\]
which completes the proof.

When shared randomness is available, we follow the procedure above for
the deterministic test obtained by conditioning on the shared
randomness $V$; let $(\cA^V_i,\cB^V_i)$, $1\leq i \leq L$, denote the
counterpart of $(\cA_i, \cB_i)$ above for shared randomness
$V$. Proceeding as before, we have
\begin{align*}
1-\delta &\le \mathbb{E}
\left[\left(\sum_{i=1}^L\bP{X^n}{\cA^V_i}\bP{Y^n}{\cB^V_i}\right)^{\frac
    1q} \left(\sum_{i=1}^L
  \bP{X^n}{\cA^V_i}^{q^\prime\left(\frac{1}{p^\prime} -
    \frac{1}{q}\right)}\right)^{\frac 1 {q^\prime}}\right] \\ &\leq
\bEE{\left(\sum_{i=1}^L\bP{X^n}{\cA^V_i}\bP{Y^n}{\cB^V_i}\right)^{\frac
    1q}}L^{\frac 1p}.
\end{align*}
It follows from Jensen's inequality that
\begin{align*}
1-\delta &\leq
\bEE{\sum_{i=1}^L\bP{X^n}{\cA^V_i}\bP{Y^n}{\cB^V_i}}^{\frac
  1q}L^{\frac 1p}.  \\ &\leq \ep^{\frac 1q}L^{\frac 1p},
\end{align*}
which completes the proof of Theorem~\ref{t:lower_bound1}.
\end{proof}

\begin{theorem}[Lower bound $2$]\label{t:lower_bound2}
Given $\delta, \ep \in (0,1)$ and $(p, q)$ such that $1\ge q \ge 0 \ge
q^\prime \ge p$ and $(X,Y)$ is $(p,q)$-reverse hypercontractive, the
minimum one-way communication for one-dimensional correlation testing $C_1^1(\delta, \ep, \tau)$ is
bounded below as
\begin{equation}
C_1^1(\delta, \ep, \tau) \geq \frac p q \log \frac 1 {1-\ep} - p \log \frac 1
{\delta}.
\label{eqn:alb}
\end{equation}
\end{theorem}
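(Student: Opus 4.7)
My plan is to mirror the proof of Theorem~\ref{t:lower_bound1}, replacing hypercontractivity by reverse hypercontractivity and swapping the ``accept'' and ``reject'' regions. For a deterministic one-way $(\ell,\delta,\ep,\tau)$-test $(f,g)$, let $L = 2^\ell$, let $\cA_i = f^{-1}(i)$ partition $\cX^n$, and, crucially, define $\cB_i = \{\by : g(\by, i) = 1\}$, i.e., the decoding region for $\cH_1^1$ (the opposite of the convention used in Theorem~\ref{t:lower_bound1}). Under $\cH_0^1$ this is now the false-alarm event, so $\sum_i \bP{X^nY^n}{\cA_i \times \cB_i} \le \delta$, while the missed-detection requirement becomes $\sum_i \bP{X^n}{\cA_i}\bP{Y^n}{\cB_i} \ge 1-\ep$.

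Next I would tensorize $(p,q)$-reverse hypercontractivity to $(X^n, Y^n)$ via Lemma~\ref{l:tensor} and specialize to indicator functions of rectangles to obtain
\[
\bP{X^nY^n}{\cA_i \times \cB_i} \ge \bP{X^n}{\cA_i}^{1/p'}\, \bP{Y^n}{\cB_i}^{1/q}.
\]
Summing and applying the $\delta$-constraint yields $\sum_i a_i^{1/p'} b_i^{1/q} \le \delta$, where $a_i = \bP{X^n}{\cA_i}$ and $b_i = \bP{Y^n}{\cB_i}$. To extract $L$, I would substitute $c_i = a_i^{1/p'}$, $d_i = b_i^{1/q}$ and use the condition $q' \ge p$ (which, after a short calculation, is equivalent to $q \le p'$) to decompose $a_i b_i = c_i^{p'} d_i^q = (c_i d_i)^q\, c_i^{p'-q}$; a single application of H\"older's inequality with conjugate exponents $1/q$ and $1/(1-q)$, followed by Jensen's inequality applied with $\sum_i a_i = 1$ (valid because the Jensen exponent $(p'-q)/(p'(1-q))$ lies in $[0,1]$ under the stated hypotheses), collapses the right-hand side to the bound $\sum_i a_i b_i \le \delta^q L^{-q/p}$. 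Combining with $\sum_i a_i b_i \ge 1-\ep$ gives $1-\ep \le \delta^q L^{-q/p}$, and taking logarithms (noting $-q/p > 0$) recovers \eqref{eqn:alb}. The shared-randomness case is handled exactly as in Theorem~\ref{t:lower_bound1}, by conditioning on $V$ and applying Jensen's inequality outside.

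The main obstacle is bookkeeping of signs and ranges. With $p \le q' \le 0 \le q \le 1$, the conjugate exponent $p'$ lies in $(0,1]$ so that both $1/p'$ and $1/q$ exceed $1$, and several powers go in the opposite direction from the hypercontractive proof. The condition $q' \ge p$ is precisely what forces $q \le p'$, which in turn keeps the Jensen exponent $(p'-q)/(p'(1-q))$ in $[0,1]$ and makes the combined H\"older-Jensen step proceed in the intended direction; verifying these sign conditions carefully is the crux of the argument, though the algebra itself is routine.
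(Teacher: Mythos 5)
Your proof is correct and follows essentially the same route as the paper's: tensorize reverse hypercontractivity, apply it to the rectangles $\cA_i \times \cB_i$, and combine H\"older and Jensen (with the constraint $q' \ge p$, i.e., $q \le p'$, keeping the Jensen exponent in $[0,1]$) to obtain $1-\ep \le \delta^q L^{-q/p}$. You were also right to flip the definition to $\cB_i = \{\by : g(\by,i)=1\}$ (the reject region): the paper's text literally says $g(\by,i)=0$, but its subsequent inequalities $1-\ep \le \sum_i \bP{X^n}{\cA_i}\bP{Y^n}{\cB_i}$ and $\sum_i \bP{X^nY^n}{\cA_i\times\cB_i}\le\delta$ only hold with the reject-region convention, so that line in the paper is a typo which your version correctly repairs; the only cosmetic difference thereafter is that the paper parametrizes the reverse-HC inequality by a power $\theta$ and sets $\theta=q$, whereas you apply it at $\theta=1$ and perform the equivalent decomposition $a_ib_i = (c_id_i)^q c_i^{p'-q}$ directly.
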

\begin{proof}
For $1\geq q > p$, suppose that $(X,Y)$ is $(p,q)$-reverse
hypercontractive, which with Lemma~\ref{l:tensor} implies that
$(X^n, Y^n)$ is $(p,q)$-reverse hypercontractive.
Furthermore, assume that $q^\prime \geq p$. Then,
for any subset $\cA\subset \cX^n$ and $\cB \subset \cY^n$, for $0 \le
\theta \le 1$ we have
\begin{align}
\bP{X^nY^n}{\cA \times \cB}^\theta \geq
\bP{X^n}{\cA}^{\theta\left(\frac {p-1}{p}\right)}
\bP{Y^n}{\cB}^{\theta \frac 1 q}.
\label{e:hyper_sets2}
\end{align}
We only provide a proof for deterministic tests; the extension to the
case when shared randomness is used can be completed as in the
proof of Theorem~\ref{t:lower_bound1}. Given a deterministic $(\ell,
\delta, \ep,\tau)$-test $T=(f,g)$, let $\cA_i = f^{-1}(i)$ for $i =1, ..., L=2^\ell$, and let $\cB_i$ denote the set $\{\by\in \cY^n: g(\by, i) = 0\}$. It follows that
\begin{align*}
1-\ep & \le \sum_{i=1}^L \bP{X^n}{\cA_i}\bP{Y^n}{\cB_i}\\ & \le
\sum_{i=1}^L \bP{X^n}{\cA_i}^{1-\theta\left(\frac {p-1}{p}\right)}
\bP{Y^n}{\cB_i}^{1-\frac \theta q} \bP{X^nY^n}{\cA_i \times
  \cB_i}^\theta,
\end{align*}
where the previous inequality uses \eqref{e:hyper_sets2}. Upon
bounding the right-side using H\"older's inequality, we get
\begin{align*}
1-\ep &\le \left(\sum_{i=1}^L\left(
\bP{X^n}{\cA_i}^{1-\theta\left(\frac{p-1}{p}\right)}\bP{Y^n}{\cB_i}^{1-\frac
  \theta q}\right)^{\frac{1}{1-\theta}}\right)^{1-\theta}
\left(\sum_{i=1}^L \bP{X^nY^n}{\cA_i \times \cB_i}
\right)^\theta\\ &\le \left(\sum_{i=1}^L
\bP{X^n}{\cA_i}^{1+\left(\frac
  {\theta}{p(1-\theta)}\right)}\bP{Y^n}{\cB_i}^{\frac{q-\theta}{q(1-\theta)}}\right)^{1-\theta}
\delta^{\theta},
\end{align*}
where the previous inequality uses the requirement
$\bPPr{\cH_0^1}{\text{Declare }\cH_1^1}\leq \delta$. Choosing $\theta=q$,
the assumption $q^\prime \ge p$ together with H\"older's inequality
implies
\[
\left(\sum_{i=1}^L
\bP{X^n}{\cA_i}^{1+\frac{q}{p(1-q)}}\right)^{1-q}\leq L^{\frac
  {-q}{p}}.
\]
Combining the bounds above, we get
\[
1-\ep \leq \delta^{q}L^{\frac{-q}p},
\]
which completes the proof.
\end{proof}  

To obtain tight lower bounds for one-dimensional $X$ and $Y$ jointly
Gaussian, we need to optimize our lower bounds over the entire 
hypercontractivity and reverse hypercontractivity ribbon. We rely on
the following characterizations of the hypercontractivity and the
reverse hypercontractivity ribbons. 
\begin{lemma}[$cf.$~\cite{Gross75}]\label{l:gross}
Let $X$ and $Y$ be one-dimensional with joint distribution given by~\eqref{e:model}. For $1\leq q\leq p$, $(X,Y)$ is $(p,
q)$-hypercontractive if 
and only if    
\begin{align}
\frac{q-1}{p-1}\geq \rho^2.
\label{e:BGB}
\end{align}
Furthermore, for $1 \ge q \geq p$, $(X,Y)$ is $(p, q)$-reverse
hypercontractive if and only if  
\begin{align}
\frac{1-q}{1-p} \geq \rho^2.
\label{e:rev}
\end{align}
\end{lemma}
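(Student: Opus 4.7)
The plan is to translate both bilinear inequalities into operator-norm bounds on the Ornstein--Uhlenbeck (Mehler) semigroup acting on the Gaussian space of $X$, and then to invoke the classical Nelson--Gross and Borell characterizations. Concretely, by the model~\eqref{e:model} we may write $Y = \rho X + \sqrt{1-\rho^2}\, Z$ with $Z \sim \cN(0,1)$ independent of $X$, so that for any bounded measurable $g$,
\[
\mathbb{E}[g(Y)\mid X = x] = (T_\rho g)(x) := \mathbb{E}\bigl[g(\rho x + \sqrt{1-\rho^2}\,Z)\bigr],
\]
i.e.\ $T_\rho$ is exactly the OU semigroup at ``time'' determined by $\rho$.

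First I would dualize. By Fubini and the tower property,
\[
\mathbb{E}[|f(X) g(Y)|] = \mathbb{E}[|f(X)|\cdot (T_\rho |g|)(X)],
\]
so by H\"older's inequality the $(p,q)$-hypercontractivity inequality $\mathbb{E}[|f(X)g(Y)|] \le \|f(X)\|_{p'}\|g(Y)\|_q$ is equivalent, upon taking the supremum over $f$, to the operator bound $\|T_\rho g\|_p \le \|g\|_q$, i.e.\ $T_\rho: L^q(\gamma)\to L^p(\gamma)$ is a contraction on Gauss space $\gamma$. Likewise, for $q\le 1$ and $p\le 1$ (so $p'\le 0$), the reverse H\"older inequality converts the reverse-hypercontractivity bilinear inequality into the reverse operator bound $\|T_\rho g\|_p \ge \|g\|_q$ for nonnegative $g$.

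Next I would quote the classical results. The Nelson--Gross theorem (see~\cite{Gross75}) states precisely that $T_\rho$ is a contraction from $L^q(\gamma)$ to $L^p(\gamma)$ for $1 \le q\le p$ if and only if $\rho^2 \le (q-1)/(p-1)$, yielding~\eqref{e:BGB}. Borell's theorem~\cite{Borell82} (cf.\ also~\cite{Mossel13}) gives the analogous reverse statement: the reverse operator inequality holds for $1\ge q \ge p$ with nonnegative $g$ iff $\rho^2 \le (1-q)/(1-p)$, yielding~\eqref{e:rev}. Combining these with the duality step of the previous paragraph gives both ``if and only if'' claims.

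The main obstacle would be proving the classical contraction statements themselves, but these are black-boxed here. For the record, necessity in either case can be verified directly by plugging in exponential test functions $g(y) = e^{ty}$: a Gaussian moment computation gives $T_\rho g(x) = \exp\bigl(\rho t x + (1-\rho^2)t^2/2\bigr)$, and then $\|T_\rho g\|_p / \|g\|_q$ is an explicit exponential in $t$ whose exponent changes sign exactly at $\rho^2 = (q-1)/(p-1)$ (respectively $\rho^2 = (1-q)/(1-p)$ in the reverse regime), forcing the stated threshold.
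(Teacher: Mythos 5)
Your proposal is correct and essentially reproduces the standard argument behind the cited result; the paper itself states the lemma with only a citation to \cite{Gross75} and does not supply a proof, so there is no in-paper proof to compare against. The duality step correctly converts the two-function inequality of the paper's definition into the one-operator contraction statement for the Ornstein--Uhlenbeck semigroup, and the Nelson--Gross and Borell theorems then give the thresholds \eqref{e:BGB} and \eqref{e:rev}. The exponential test-function computation at the end is a valid necessity check and does produce the stated thresholds (after cancellation one gets $1-\rho^2 + p\rho^2 - q$, whose sign condition is exactly $\rho^2 \lessgtr (q-1)/(p-1)$).

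One small inaccuracy worth fixing: the parenthetical ``(so $p' \le 0$)'' is not right for all $p \le 1$. If $0 < p < 1$ then indeed $p' = p/(p-1) < 0$, but if $p < 0$ then $p' \in (0,1)$, and $p=0$ is excluded. This does not affect the argument — the reverse H\"older duality $\inf_{f\geq 0}\mathbb{E}[f\,h]/\|f\|_{p'} = \|h\|_p$ holds across all these cases, since $(p,p')$ remain H\"older conjugates — but the justification as written misstates the sign of $p'$. You should also note explicitly that for the reverse direction the operator inequality $\|T_\rho g\|_p \ge \|g\|_q$ is established only for nonnegative $g$, and that this suffices because the bilinear form in the paper's definition already involves $|f|$ and $|g|$, so passing to $|g|$ loses nothing.
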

The next corollary is obtained by maximizing the right-sides of
\eqref{eqn:lb} and \eqref{eqn:alb}, respectively, over the set of
$(p,q)$ satisfying \eqref{e:BGB} and \eqref{e:rev}; the upper bound is
from Theorem~\ref{t:upper_bound1}. 

\begin{corollary}\label{c:SS1} For $0 < \tau \le 1$,
\begin{enumerate}
\item for $\delta\leq 1$ and $\ep$ such that $\delta
+ \ep^{\frac{1-\tau}{1+\tau}} \leq 1$,
\[
C_1^1(\delta, \ep, \tau) =
\frac 1{\tau^2} \log \frac 1 \ep+ 
\Theta_\delta\left(\sqrt{
\log \frac 1 \ep}\right);
\]
\item for $\delta,\ep\in(0,1)$,
\[
C_1^1(\delta, \ep, \tau) = \frac {1-\tau^2}{\tau^2} \log \frac 1
\delta+
\Theta_\ep\left(\sqrt{\log \frac 1 \delta}\right),
\]
where the notation $\Theta_x$ denotes that the constant implied by $\Theta$ depends on $x$. 
\end{enumerate}
\end{corollary}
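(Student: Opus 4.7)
The upper bounds in both parts follow directly from Theorem~\ref{t:upper_bound1} by expanding
\[
\left(\sqrt{\log(1/\ep)}+\sqrt{(1-\tau^2)\log(1/\delta)}\right)^2 = \log\tfrac{1}{\ep} + 2\sqrt{(1-\tau^2)\log\tfrac{1}{\ep}\log\tfrac{1}{\delta}} + (1-\tau^2)\log\tfrac{1}{\delta},
\]
and identifying the dominant term ($\tau^{-2}\log(1/\ep)$ in Case~1, $\tau^{-2}(1-\tau^2)\log(1/\delta)$ in Case~2) with the remainder absorbed into $\Theta_\delta(\sqrt{\log(1/\ep)})$ or $\Theta_\ep(\sqrt{\log(1/\delta)})$ respectively. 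So the substance lies in the matching lower bounds.

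The plan for the lower bounds is to apply Theorems~\ref{t:lower_bound1} and~\ref{t:lower_bound2} to the worst-case distribution under $\cH_0^1$, namely jointly Gaussian $(X,Y)$ with correlation $\rho = \tau$, and optimize the free parameters $(p,q)$ over the boundary of the corresponding Gaussian (reverse) hypercontractivity ribbon given by Lemma~\ref{l:gross}. For Case~1, I would parametrize along the boundary $q = 1+\tau^2(p-1)$ of the hypercontractivity ribbon (\eqref{e:BGB} with equality at $\rho=\tau$), on which $p/q = 1/(\tau^2 + (1-\tau^2)/p) \nearrow 1/\tau^2$ as $p\to\infty$. Theorem~\ref{t:lower_bound1} then yields
\[
C_1^1(\delta,\ep,\tau) \;\geq\; \frac{p}{1+\tau^2(p-1)}\log\frac{1}{\ep} \,-\, p\log\frac{1}{1-\delta},
\]
which is concave in $p$ with unique maximizer $p^* = \tau^{-2}\sqrt{(1-\tau^2)\log(1/\ep)/\log(1/(1-\delta))}$; substituting back gives $\tau^{-2}\log(1/\ep) - 2\tau^{-2}\sqrt{(1-\tau^2)\log(1/\ep)\log(1/(1-\delta))}$, matching the upper bound up to $\Theta_\delta(\sqrt{\log(1/\ep)})$.

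Case~2 proceeds symmetrically via Theorem~\ref{t:lower_bound2}. Parametrizing along the boundary $p = 1 - (1-q)/\tau^2$ of the reverse hypercontractivity ribbon (\eqref{e:rev} with equality at $\rho = \tau$), the lower bound becomes a function of $q$ alone, concave with unique maximizer $q^* = \sqrt{(1-\tau^2)\log(1/(1-\ep))/\log(1/\delta)}$; substitution yields leading term $\tau^{-2}(1-\tau^2)\log(1/\delta)$ with correction $\Theta_\ep(\sqrt{\log(1/\delta)})$, matching the upper bound.

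The main technical obstacle will be verifying feasibility of $(p^*,q^*)$ within the ribbon constraints of Theorems~\ref{t:lower_bound1} and~\ref{t:lower_bound2}. For Case~1, one must ensure $p^* \geq 1+1/\tau$ so that the chosen boundary $q = 1+\tau^2(p-1)$ actually dominates the other constraint $q \geq p^\prime$; a short calculation reduces this to $\log(1/\ep) \geq \tau^2 \tfrac{1+\tau}{1-\tau}\log(1/(1-\delta))$, which is implied (with room to spare) by the hypothesis $\delta + \ep^{(1-\tau)/(1+\tau)}\leq 1$. The analogous feasibility check in Case~2 reduces to $q^* \leq 1-\tau$, which holds automatically once $\delta$ is small enough relative to $\ep$, and can be absorbed into the $\Theta_\ep$ constant.
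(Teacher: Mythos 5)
Your plan follows the paper's proof essentially exactly: take the hypercontractivity lower bound (Theorem~\ref{t:lower_bound1}) and the reverse-hypercontractivity lower bound (Theorem~\ref{t:lower_bound2}), parametrize along the boundary of the Gaussian ribbon at $\rho=\tau$ via Lemma~\ref{l:gross}, optimize the single free parameter, and verify the resulting pair stays inside the admissible region; the paper's parametrization by $w$ with $p=1+w$, $q=1+\tau^2 w$ is the same as yours with $p$ as the free variable. One arithmetic slip worth correcting: the true maximizer in Case~1 is $p^* = 1 + \frac{1}{\tau^2}\bigl(\sqrt{(1-\tau^2)\log(1/\ep)/\log(1/(1-\delta))}-1\bigr)$, not $\tau^{-2}\sqrt{(1-\tau^2)\log(1/\ep)/\log(1/(1-\delta))}$, and consequently the feasibility condition $p^*\ge 1+1/\tau$ reduces to $\log(1/\ep)\ge \frac{1+\tau}{1-\tau}\log\frac{1}{1-\delta}$ (no $\tau^2$ factor), which is precisely, not merely implied by, the hypothesis $\delta+\ep^{(1-\tau)/(1+\tau)}\le 1$; neither slip affects the conclusion since the dropped terms are lower order, but the tighter feasibility means there is no ``room to spare.''
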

\begin{proof}
Assume first that $\ep^{\frac{1-\tau}{1+\tau}}\leq
1-\delta$. Using the characterization in Lemma~\ref{l:gross}, $(X,Y)$
is $(p,q)$-hypercontractive for any $p$ and $q$ satisfying
\begin{align*}
p &= 1+w,\\ q &= 1+\tau^2w,
\end{align*}
for any $w\geq 0$. Inserting this choice of $(p,q)$ in the lower bound
of Theorem~\ref{t:lower_bound1}, we get for any $(\ell,\delta, \ep, \tau)$-test
that
\[
\ell \ge \frac{1+w}{1+\tau^2 w} \log \frac{1}{\varepsilon} - (1+w)\log
\frac{1}{1-\delta}.
\]
For brevity, we denote $\xi:=\log (1-\delta)/\log \ep$; our
assumption $\ep^{\frac{1-\tau}{1+\tau}} \le 1-\delta$ is
equivalent to $\xi \leq (1-\tau)/(1+\tau)$. To obtain the
tightest lower bound, we maximize $(1+w)/(1+\tau^2w) - \xi (1+w)$
over $w\geq 0$. The maximum is obtained at $w^*$ given by
\[
w^* = \frac 1 {\tau^2}\left(\sqrt{\frac {1-\tau^2}\xi}-1\right),
\]
provided $\xi \leq 1-\tau^2$, which holds since $1-\tau^2 \geq
(1-\tau)/(1+\tau)$. Furthermore, the corresponding optimal $p^*$
and $q^*$ satisfy $p^{*\prime}\leq q^*$ if and only if
\[
\tau^2\leq \left(\sqrt{\frac {1-\tau^2}\xi}-1\right)^2,
\]
which is satisfied when $\xi \leq (1-\tau)/(1+\tau)$. Thus,
\begin{align}
\ell &\ge\log \frac{1}{\varepsilon} \left( \frac{1+w^*}{1+\tau^2 w^*} -
\xi(1+w^*)\right) \nonumber \\ &= \frac 1{\tau^2} \left(\sqrt{\log
  \frac 1\ep} - \sqrt{\big(1- \tau^2\big)\log \frac 1 {1-\delta}}
\right)^2.
\label{e:lower_bound_SS1}
\end{align}
The first part of Corollary~\ref{c:SS1} follows
from~\eqref{e:lower_bound_SS1} and Theorem~\ref{t:upper_bound1}.

To get the second part of Corollary~\ref{c:SS1}, we obtain a
replacement for \eqref{e:lower_bound_SS1} using the reverse
hypercontractivity part of Lemma~\ref{l:gross}.  Specifically, $(X,Y)$
is $(p,q)$-reverse hypercontractive for any $p$ and $q$ satisfying
\begin{align*}
p &= 1-w,\\ q &= 1-\tau^2w,
\end{align*}
for any $\frac{1}{\tau^2} \geq w \geq 0$ since $q$ must be greater
than or equal to $0$. Inserting this choice of $(p,q)$ in the lower
bound of Theorem~\ref{t:lower_bound2}, we get for any $(\ell,\delta,
\ep,\tau)$-test that
\[
\ell \ge \frac{1-w}{1-\tau^2 w} \log \frac{1}{1-\varepsilon} - (1-w)\log
\frac{1}{\delta}.
\]
We maximize the right-side of the above inequality subject to $w \le
\frac{1}{\tau^2}$. The maximum is obtained at $w^*$ given by
\[
w^* = \frac 1 {\tau^2}\left(1-\sqrt{\frac {(1-\tau^2)\log
    \frac{1}{1-\ep}}{\log \frac{1}{\delta}}}\right).
\]
Note that $w^* \le \frac{1}{\tau^2}$ is satisfied for every $\delta$
and $\ep$, and the additional assumption
$\ep^{\frac{1-\tau}{1+\tau}}\leq 1-\delta$ of the first part of
Corollary~\ref{c:SS1} is not required for the second part. Thus,
\begin{align}
\ell &\ge \frac{1-w^*}{1-\tau^2 w^*} \log \frac{1}{1-\ep} - (1-w^*)\log
\frac{1}{\delta} \nonumber \\ &= \frac 1{\tau^2} \left(\sqrt{\log
  \frac 1{1-\ep}} - \sqrt{\big(1- \tau^2\big)\log \frac 1 {\delta}}
\right)^2, \nonumber
\end{align}
which together with Theorem~\ref{t:upper_bound1} yields the second
part of Corollary~\ref{c:SS1}.
\end{proof}

Finally, we exploit tensorization property in Lemma~\ref{l:tensor} to provide a matching lower bound for
Theorem~\ref{t:upper_boundd} in the result below. 
\begin{theorem}
\label{t:ddim}
\begin{enumerate} For $0 < \tau \le 1$,
\item for $\delta\in(0,1)$ with $\ep$ such that
$\delta+ \ep^{\frac{1-\tau}{1+\tau}} \leq 1$, we have 
\[
C_d^1(\delta, \ep, \tau) \ge \frac d{\tau^2} \left(\sqrt{\log \frac 1\ep} - \sqrt{\left(1- \frac{\tau^2}d\right)\log \frac 1 {1-\delta}}\right)^2;
\]
\item for $\delta,\epsilon \in (0,1)$,
\[
C_d^1(\delta, \ep, \tau) \ge \frac d{\tau^2} \left(\sqrt{\log  \frac 1{1-\ep}} - \sqrt{\left(1- \frac {\tau^2}{d}\right)\log \frac 1 {\delta}}\right)^2.
\]
\end{enumerate}
\end{theorem}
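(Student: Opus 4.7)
The plan is to reduce the $d$-dimensional composite test to a one-dimensional correlation test with effective correlation $\tau/\sqrt d$, and then carry out exactly the optimization performed in the proof of Corollary~\ref{c:SS1}, but with $\tau^2/d$ in place of $\tau^2$. I would fix the specific alternative $\rho_0 = (\tau/\sqrt d)(1,\ldots,1)^T \in \cH_0^d$, for which $\|\rho_0\|_2 = \tau$. Any lower bound on the simple binary test of $\bPP{\rho_0}$ versus $\bPP{0}$ is automatically a lower bound on $C_d^1(\delta,\ep,\tau)$, since the composite problem must be solved uniformly over every $\rho$ with $\|\rho\|_2 \ge \tau$.

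The crucial structural observation is that, under $\bPP{\rho_0}$, the marginal pair $(X(1), Y)$ is jointly Gaussian with correlation exactly $\rho_0(1) = \tau/\sqrt d$ and standard Gaussian marginals, while under $\bPP{0}$ these two variables are independent with the same marginals. This is precisely the joint law of a one-dimensional correlation test at level $\tau/\sqrt d$. To translate a lower bound on this marginal pair into a lower bound on the $d$-dimensional protocol, I would reveal to $\cP_2$ carefully constructed auxiliary information---derived from $X(2),\ldots,X(d)$ together with fresh Gaussian noise drawn from shared randomness---arranged so that (a)~the marginal distribution of what $\cP_2$ observes is identical under $\cH_0^d$ and $\cH_1^d$, and (b)~$\cP_2$ can form a sufficient statistic $\widetilde Y$ whose joint distribution with $X(1)$ is exactly the one-dimensional Gaussian pair above. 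Providing extra information to $\cP_2$ can only decrease the required communication, so this reduction yields $C_d^1(\delta,\ep,\tau) \ge C_{\text{reduced}}$.

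With the reduction in place, Lemma~\ref{l:gross} shows that the effective one-dimensional pair is $(p,q)$-hypercontractive iff $(q-1)/(p-1) \ge \tau^2/d$, and reverse $(p,q)$-hypercontractive iff $(1-q)/(1-p) \ge \tau^2/d$; Lemma~\ref{l:tensor} tensorizes these ribbons to $n$ samples. Substituting $(p,q) = (1+w, 1+(\tau^2/d)w)$ into Theorem~\ref{t:lower_bound1} and optimizing over $w \ge 0$ produces the first inequality; the analogous parametrization $(p,q) = (1-w, 1-(\tau^2/d)w)$ in Theorem~\ref{t:lower_bound2} produces the second. The algebra is identical to the optimization in the proof of Corollary~\ref{c:SS1}, except that $\tau^2$ is replaced by $\tau^2/d$ throughout, which accounts for the $d/\tau^2$ prefactor and the $(1-\tau^2/d)$ correction in the claimed bounds.

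The main technical obstacle is step~(b) of the reduction. A naive reveal of only $X(2)^n,\ldots,X(d)^n$ to $\cP_2$ forces the natural sufficient statistic $\widetilde Y = Y - (\tau/\sqrt d)\sum_{i\ge 2} X(i)$ to have variance $1 - \tau^2(1-1/d)$ under $\cH_0^d$ but $1 + \tau^2(1-1/d)$ under $\cH_1^d$; this variance gap would trivialize the reduced problem (it could be solved by a variance test at $\cP_2$ alone with zero communication) and produce a vacuous bound. To repair this, enough Gaussian noise from shared randomness must be folded into the revealed coordinates so that the variance of $\widetilde Y$ is $1$ under both hypotheses while its covariance with $X(1)$ remains exactly $\tau/\sqrt d$ under $\bPP{\rho_0}$. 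Verifying that this delicate balancing can be realized by a single construction---and that the resulting $(X(1)^n, \widetilde Y^n)$ is genuinely an $n$-fold product of standard Gaussian pairs with correlation $\tau/\sqrt d$ under $\cH_0^d$ and independent under $\cH_1^d$---is the heart of the proof.
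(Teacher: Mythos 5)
Your overall recipe---pick $\rho_0$ with $\rho_0(i)^2=\tau^2/d$, then re-run the optimization from Corollary~\ref{c:SS1} with $\tau^2/d$ in place of $\tau^2$ inside Lemma~\ref{l:gross}---is the right one, and the final algebra coincides with the paper's. But your reduction step differs from the paper's and contains a gap that you flag but do not close, and I do not think the route you sketch can be repaired.

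Your plan is to collapse $\cP_2$'s observation to a single scalar $\widetilde Y$ that is jointly Gaussian with $X(1)$ at correlation $\tau/\sqrt d$. Two things go wrong. First, even if such a $\widetilde Y$ existed, $\cP_1$'s observation remains the full vector $X\in\mR^d$, so to apply Theorem~\ref{t:lower_bound1} you need the hypercontractivity ribbon of the \emph{vector-to-scalar} pair $(X,\widetilde Y)$. Lemma~\ref{l:gross} only treats one-dimensional $X$, and the relevant quantity for a Gaussian pair $(X,\widetilde Y)$ with $X$ vector-valued is the maximal correlation between $X$ and $\widetilde Y$, not the marginal correlation between $X(1)$ and $\widetilde Y$; in particular, if your $\widetilde Y$ is built by subtracting off (noisy versions of) $X(2),\ldots,X(d)$, you must verify it ends up \emph{orthogonal} to those coordinates, not just argue about its covariance with $X(1)$. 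Second, the ``variance balancing'' you rightly identify as the heart of the matter appears genuinely obstructed, not merely delicate: if the revealed $W$ is a linear combination of $X(2),\ldots,X(d)$ and fresh noise and $\widetilde Y=Y-f(W)$, then under $\cH_1^d$ any nonzero $f(W)$ inflates $\var(\widetilde Y)$ above $1$, while $f(W)=0$ leaves exactly the variance mismatch you already diagnosed; the noise level that fixes the $\cH_1^d$ marginal simultaneously ruins the $\cH_0^d$ structure. You present this as a technicality to be filled in, but it is precisely where the scalar reduction breaks.

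The paper's reduction goes in the opposite direction: rather than collapsing $\cP_2$'s observation to a scalar, it \emph{lifts} it to a vector. $\cP_2$ is handed $(Y(1),\ldots,Y(d))$ with $\bEE{Y(i)\mid X}=\rho(i)X(i)$, constructed so that the pairs $(X(i),Y(i))$ are mutually independent across coordinates and samples, and the original scalar is recoverable from the revealed vector. Then the \emph{proof} of Theorem~\ref{t:lower_bound1} is re-run verbatim with $\cX=\cY=\mR^d$; Lemma~\ref{l:tensor} tensorizes across the $nd$ independent one-dimensional Gaussian pairs, Lemma~\ref{l:gross} supplies each factor's ribbon, and the combined ribbon is governed by $\rho_{\max}^2=\max_i\rho(i)^2$. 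Choosing $\rho(i)^2=\tau^2/d$ minimizes $\rho_{\max}$ over $\{\|\rho\|_2\geq\tau\}$, and the Corollary~\ref{c:SS1} optimization then gives the stated bounds. So the shared ingredient is the final optimization with $\rho_{\max}^2=\tau^2/d$, but the structural move---expanding $\cP_2$'s side into a tensor product so that tensorization can do the work, rather than contracting it to a one-dimensional sufficient statistic---is what your proposal is missing.
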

\begin{proof}
  We consider a different problem where the observation of $\cP_1$ remains the same but we provide more information to $\cP_2$. Specifically, $\cP_1$ observes i.i.d. copies of $X = (X(1),\ldots,X(d))$ and  $\cP_2$ observes i.i.d. copies of $Y = (Y(1),\ldots,Y(d))$ where for $i=1,\ldots,d$,
\[
\mathbb{E}[Y(i)|X] = \rho(i) X(i).
\]
Note that in our original problem the observation of $\cP_2$ are
i.i.d. copies of $Y(1) + \ldots + Y(d)$. 
With this modified observation for $\cP_2$, we consider the hypothesis
testing problem of $\mathcal{H}_0^d$ versus $\mathcal{H}_1^d$ as
before. Denote by $\widetilde{C}_d^1(\delta,\varepsilon,\tau)$ the
the minimum $\ell$ such that we can find a $1$-interactive $(\ell, \delta,
\varepsilon, \tau)$-test for this modified problem. Since the observation for
the former problem can be obtained from the latter problem as well, we have
\[
  {C}_d^1(\delta,\varepsilon,\tau) \geq
  \widetilde{C}_d^1(\delta,\varepsilon,\tau).
  \]
Furthermore, with $\mathcal{X} =
\mathcal{Y} = \mathbb{R}^d$, the proof of
Theorem~\ref{t:lower_bound1} applies to the modified problem as well,
and we obtain the following bound:
\[
\widetilde{C}_d^1(\delta,\varepsilon,\tau) \ge \frac p q \log \frac 1 \ep - p \log \frac 1 {1-\delta},
\]
where $(X^n, Y^n)$ is $(p,q)$-hypercontractive.
By Lemma~\ref{l:tensor} and Lemma~\ref{l:gross}, we can parameterize
$p$ and $q$ as
\begin{align*}
p &= 1+w,\\ q &= 1+\rho_{\max}^2w, 
\end{align*}
with $w \ge 0$ and $\rho_{\max}^2 := \max_{i=1}^d \rho(i)^2$. Proceeding as in the proof of Corollary~\ref{c:SS1}, we get
\[
\widetilde{C}_d^1(\delta,\varepsilon,\tau) \ge \frac 1{\rho_{\max}^2} \left(\sqrt{\log \frac 1\ep} - \sqrt{\big(1- \rho_{\max}^2\big)\log \frac 1 {1-\delta}}\right)^2.
\]
Note that we can choose any $\rho$ such that $\|\rho\|_2\geq \tau$. Among all such $\rho$s, the minimum value of $\rho_{\max}$ is attained by $\rho$ with $\rho(i)^2 =  \tau^2/d$. Using this value for $\rho$, we get
\[
\widetilde{C}_d^1(\delta,\varepsilon,\tau) \ge \frac d{\tau^2} \left(\sqrt{\log \frac 1\ep} - \sqrt{\left(1- \frac{\tau^2}d\right)\log \frac 1 {1-\delta}}\right)^2,
\]
which completes the proof of the first part of Theorem~\ref{t:ddim}. The proof of the second part is completed similarly by using the tensorization property of the reverse hypercontractivity ribbon.
\end{proof}

\section{Lower bounds for interactive communication}\label{s:proofLB2}
\subsection{Proof of Theorem~\ref{t:interactive}}
\label{ss:interacTestingLB}
Let $T=(\pi, g)$ constitute an $(\ell,\delta, \ep, \tau)$-test. Denote
by $P$ the distribution of $(Y^n,\Pi,V)$ under $\mathcal{H}_0^d$ and
by $Q$ the distribution of $(Y^n, \Pi,V)$ under $\mathcal{H}_1^d$.

Then, 
\begin{align*}
  P\left(g(Y^n, \Pi, V)=1\right)
  &\leq \delta ,
  \\
  Q\left(g(Y^n, \Pi, V)=0\right) &\leq \varepsilon.
\end{align*}
Denoting $P\left(g(Y^n, \Pi, V)=0\right)$ and 
$Q\left(g(Y^n, \Pi, V)=0\right)$ by $p_0$ and $p_1$, respectively,
let $P_i$ denote the Bernoulli distribution with parameter $p_i$, $i=0,1$. Then, by the data processing inequality applied using the channel $\indicator\{g(Y^n, \Pi, V)=0\}$, we have
\begin{align}
  D(P\|Q) &\ge D(P_0 || P_1)
  \nonumber
  \\
&\geq 
p_0\log\frac 1 {p_1} -1
\nonumber
\\
&\ge (1-\delta)\log\frac{1}{\varepsilon} -1,
\label{eq:dataprocessing}
\end{align}
where we used the bound $h(p_0)\leq 1$ and $h(\cdot)$ denotes the binary entropy function. Furthermore, by Lemma \ref{l:HadarShayevitz} we have $D(P\|Q) \le \rho_{\max}^2\ell$, which with the previous bound gives
\[
\ell \ge \frac{1}{\rho_{\max}^2}\left((1-\delta)\log\frac{1}{\ep} - 1\right).
\]
By choosing $\rho$ such that $\|\rho\|_2\geq \tau$ and $\rho_{\max}$ is maximized, namely by choosing $\rho(i)^2 =  \tau^2/d$ for every $1\leq i \leq d$, we get the desired bound
\[
\ell \ge \frac{d}{\tau^2}\left[(1-\delta)\log\frac{1}{\ep} - 1\right].
\]

\subsection{Lower Bound for Estimation}
\label{ss:estimationLB}
We provide lower bounds for the estimation error using Fano's method. 
Using the \emph{Gilbert-Varshamov construction} (see, for instance,
\cite[Problem 5.5]{CsiKor11}) we can find $m\geq 2^{d(1-h(1/4))}\geq 2^{d/6}$ vectors $u_1 , ..., u_m \in\{-1,+1\}^d$
such that\footnote{We denote by $d_H(u,v)$ the Hamming distance between $u$ and $v$.} $d_H(u_i, u_j)\geq d/8$ for every $i\neq j$. Therefore, the vectors $\rho_i:= \frac{\Delta}{\sqrt{d}}\cdot u_i$, $1\leq i \leq m$ satisfy 
\begin{align*}
  \min_{i,j \in [m]:i\neq j} \|\rho_i - \rho_j\|_2^2 &=\frac{4\Delta^2d_H(u_i, v_i)}{d}
\geq \frac{\Delta^2}2,
\\
\max_i\rho_j(i)^2 &= \frac{\Delta^2}{d}, \quad \forall\, 1\le j\leq m.
\end{align*}
Consider an $r$-interactive $(\ell, \tau)$-estimate $(\pi, \widehat{\rho})$. 
We use the estimator $\widehat{\rho}$ to resolve between the hypotheses $\mathcal{H}_j, j \in [m]$ where under $\mathcal{H}_j$, $X \in \mathbb{R}^d$ and $Y \in \mathbb{R}$ are jointly Gaussian and
\[
\mathbb{E}[Y(i)|X] = \rho_j(i)X(i), i \in [d].
\]
Consider the test which declares\footnote{In the remainder of this proof, with an abuse of notation, we denote the random variable $\widehat{\rho}(Y^n, \Pi, V)$ by $\widehat{\rho}$.} $\mathcal{H}_j$ if $\|\widehat{\rho}-\rho_j\|_2^2 < \Delta^2/8$; the output is unique since $\|\rho_i-\rho_j\|_2^2\geq\Delta^2/2$ for every $i\neq j$. 
The probability of error for this test under $\mathbb{P}_{\rho_j}$ is bounded above by 
\begin{align*}
  \mathbb{P}_{\rho_j} \left(\|\widehat{\rho}-\rho_j\|_2^2 \ge \Delta^2/8\right)\le \frac{8}{\Delta^2} \mathbb{E}_{\rho_j}\left[\|\widehat{\rho}-\rho_j\|_2^2\right],
\end{align*}
where the inequality is by Markov's inequality. Therefore, denoting by $P_e^*$ the minimum average probability of error for this hypothesis testing problem under uniform prior on the hypotheses, we get from \eqref{eqn:estaccuracy} that
\begin{align*}
\tau^2 &\ge \max_{j \in [m]} \frac{\Delta^2}8 \mathbb{P}_{\rho_j}\left(\|\widehat{\rho}-\rho_j\|_2^2 \ge \Delta^2/8\right)\\
&\ge \frac{\Delta^2}{8m} \sum_{i=1}^{m} \mathbb{P}_{\rho_j}\left(\|\widehat{\rho}-\rho_j\|_2^2 \ge \Delta^2/8\right)\\
&\geq \frac{\Delta^2}{8} P_e^*.
\end{align*}
By Fano's inequality, we have 
\begin{equation}
\label{e:fano}
P_e^* \ge 1 - \frac{C(W)+1}{\log m},
\end{equation}
where $W$ denotes the channel with input $j\in\{1,...,m\}$ and output
$(Y^n,\Pi,V)$ with distribution corresponding to the correlation $\rho_j$ between $X^n$ and $Y^n$, and $C(W)$ denotes the capacity of channel $W$. Recall the well-known bound
\[
C(W) \le \min_Q\max_j D(W(\cdot|j) \| Q).
\]
We use this bound with $Q$ chosen to be the distribution of $(Y^n, \Pi,V)$
when the correlation between $X$ and $Y$ is
$\rho=0$.

  Then, by Lemma \ref{l:HadarShayevitz} we have
\[
D(W(\cdot|j) \| Q) \le \max_{1\leq i \leq d}\rho_j(i)^2 \ell= \frac{\Delta^2\ell}{d}.
\]
Combining the bounds above yields
\[
\tau^2\geq \frac{\Delta^2}8 \left(1- \frac{6(\Delta^2\ell/d + 1)}{d}\right).
\]
In particular, for $d\geq 12$, setting $\Delta^2\ell/d^2=1/24$ gives
$\ell \geq \frac{d^2}{768\tau^2}$. Note that for $d <12$, for an appropriate constant $c$, the lower bound $\ell\geq cd^2\,/\tau^2$ holds
since we already have an $\Omega(d/\tau^2)$ lower bound for the testing problem. This completes the proof.
\qed

\section{Extensions and discussion}
We conclude with a discussion on various extensions of our result, and state some of these extensions without proof (the proofs are very similar to the others in the paper).

First, we note that while the hypercontractivity based lower bound
yields a tight dependence on $\delta$ or $\ep$ separately in
Corollary~\ref{c:SS1}, 
it does not characterize the joint dependence on $\delta$ and $\ep$
simultaneously. Interestingly, when we allow  $\delta>1/2$ and have
$\ep$ small, such a joint characterization is possible. Specifically,
for $d=1$, consider the simple binary hypothesis problem of correlation $\rho$
versus correlation $0$. The test we use in Theorem~\ref{t:one-sided} 
for resolving between $\mathcal{H}_0^+$ and $\mathcal{H}_1^1$ with a
different choice for $\theta$ and $r$ yields a joint
characterization of one-way minimum communication needed for this problem (see~\cite{krs-ht-isit18}).
Interestingly, the overall communication is below
$(1/\rho^2)\max\{\log 1/\ep, \log 1/\delta\}$.
\begin{theorem}
\label{c:SS2}
For $d=1$, $0 < \rho \leq 1$, $\delta \in (1/2,1)$, and $\ep$ such that $\delta + \ep^{\frac{1-\tau}{1+\tau}} \leq 1$,
the minimum one-way communication needed to test if correlation is $\rho$ or $0$ is given by 
 \[
\frac 1{\rho^2} \cdot\left(\sqrt{\log \frac 1\ep}
- \sqrt{\big(1- \rho^2\big)\log \frac 1 {1-\delta}} \right)^2 +
O\left(\sqrt{\log \frac 1{\ep} \log \frac 1{1-\delta}}\right).
\]
\end{theorem}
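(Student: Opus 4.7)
The plan is to prove matching upper and lower bounds. The lower bound follows essentially for free from Theorem~\ref{t:lower_bound1} via the same hypercontractivity optimization used in Corollary~\ref{c:SS1}, while the upper bound requires a modified analysis of the one-sided test of Theorem~\ref{t:one-sided} tailored to the new regime $\delta > 1/2$.

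For the lower bound, I would invoke Theorem~\ref{t:lower_bound1} with the parameterization $p = 1+w$, $q = 1+\rho^2 w$, $w \geq 0$, under which $(X,Y)$ is $(p,q)$-hypercontractive by Lemma~\ref{l:gross}. This yields
\[
\ell \geq \frac{1+w}{1+\rho^2 w}\log\frac{1}{\ep} - (1+w)\log\frac{1}{1-\delta},
\]
and optimizing the right-hand side over $w \geq 0$ exactly as in the proof of Corollary~\ref{c:SS1} delivers the desired bound $\ell \geq (1/\rho^2)\bigl(\sqrt{\log(1/\ep)} - \sqrt{(1-\rho^2)\log(1/(1-\delta))}\bigr)^2$. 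The hypothesis $\delta + \ep^{(1-\rho)/(1+\rho)} \leq 1$ translates to $\log(1/(1-\delta))/\log(1/\ep) \leq (1-\rho)/(1+\rho)$, which is exactly what is needed to ensure that the optimal $w^*$ is nonnegative and that the H\"older conjugate of the optimal $p$ is at most the optimal $q$, so that Theorem~\ref{t:lower_bound1} is in fact applicable.

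For the upper bound, the key new idea is to analyze the test of Theorem~\ref{t:one-sided} in the hitherto forbidden parameter regime $\theta > \rho$. Conditioned on a matching index $j$, Berry-Esseen gives that $U_j \cdot Y$ is approximately $\cN\bigl(\rho (U_j \cdot X), n(1-\rho^2)\bigr)$ under the $\rho$-correlated hypothesis, so the conditional probability of declaring $\cH_1^1$ is approximately $1 - Q\bigl((\theta-\rho)r/\sqrt{1-\rho^2}\bigr)$, which exceeds $1/2$ when $\theta > \rho$. This large miss probability was inadmissible in Theorem~\ref{t:one-sided} but is tolerable here precisely because $\delta > 1/2$. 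Requiring Type~I error at most $\delta$ then yields $(\theta-\rho)r/\sqrt{1-\rho^2} \leq Q^{-1}(1-\delta) = \sqrt{2\ln(1/(1-\delta))}(1 - o(1))$, while the Type~II constraint $2^k Q(r) Q(\theta r) \leq \ep$ still gives $\theta r \geq \sqrt{2\ln(1/\ep)}$ up to lower-order terms, exactly as in the proof of Theorem~\ref{t:one-sided}. Subtracting via $\rho r \geq \theta r - (\theta - \rho) r$ gives
\[
\rho r \geq \sqrt{2\ln\frac{1}{\ep}} - \sqrt{2(1-\rho^2)\ln\frac{1}{1-\delta}},
\]
which is positive by the hypothesis; squaring and using $k \approx r^2/(2\ln 2)$ delivers the claimed upper bound, with the $O\bigl(\sqrt{\log(1/\ep)\log(1/(1-\delta))}\bigr)$ correction absorbing the subleading Berry-Esseen terms and the $Q^{-1}$-expansion error.

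The hard part will be the careful accounting in the $\theta > \rho$ regime. In Theorem~\ref{t:one-sided}, the monotonicity of $f(a) = (\theta - a)/\sqrt{1-a^2}$ required $\theta \leq \tau$, and with this no longer available one must work directly with the known correlation $\rho$ of the binary problem and track the conditional distribution of $U_j \cdot Y$ given a matching index, including all Berry-Esseen corrections. The feasibility condition $\delta + \ep^{(1-\rho)/(1+\rho)} \leq 1$ then emerges as the precise condition under which the coupled Type~I and Type~II constraints admit a positive $r$, exactly mirroring the condition required for the hypercontractivity lower bound.
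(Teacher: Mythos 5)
Your proposal matches the paper's (sketched) approach: the lower bound is precisely the hypercontractivity argument of Theorem~\ref{t:lower_bound1} optimized as in Corollary~\ref{c:SS1} (which never used $\delta \le 1/2$), and the upper bound is the one-sided test of Theorem~\ref{t:one-sided} re-analyzed with $\theta > \rho$, exactly as the paper indicates ("with a different choice for $\theta$ and $r$"), with the $\delta>1/2$ hypothesis absorbing the resulting conditional miss probability exceeding $1/2$. One small inaccuracy that does not affect the argument: Berry--Esseen is not needed for $U_j \cdot Y$ given $X=x$ and $U=u$, since $Y_i\mid X_i=x_i$ is exactly Gaussian and hence so is the conditional law of $U_j\cdot Y$; Berry--Esseen is only required for the $U_j\cdot X$ step.
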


In another direction, we can consider the simple binary hypothesis
testing problem of $\rho=\rho_0$ versus $\rho=\rho_1$, where
$1>\rho_0> \rho_1>0$. Once again, by modifying the parameters for the
test used in  Theorem~\ref{t:one-sided}, we get a generalization of our
results for $d=1$ to the case $\rho_1>0$.
Specifically, in this case, the probability of error requirements as
in \eqref{eq:h0} and \eqref{eq:h1} yield
\begin{align*}
1-e^{2^kQ(r)} - Q\left(\frac{r(\rho_0-\theta)}{\sqrt{1-\rho_0^2}}\right) - \eta &\ge 1-\delta,\\
2^{k+1}Q(r)Q\left(\frac{r(\theta-\rho_1)}{\sqrt{1-\rho_1^2}}\right) &\le \epsilon.
\end{align*}
Proceeding in a similar manner as our earlier analysis and upon
setting $\theta\in (\rho_1, \rho_0)$ and
\[
r^2 =\frac{2\ln 2}{(\rho_0-\rho_1)^2}\left(\sqrt{(1-\rho_1^2)\log\frac 1 \ep + \log \ln \frac
  3 \delta + 1}+\sqrt{(1-\rho_0^2) \log \frac 3 \delta}\right)^2,
\]
we obtain the following result.
\begin{theorem}
For $d=1$, $\delta, \ep \in (0,1)$, $0 < \rho_1 < \rho_0 < 1$, we can find a
distributed test for $\rho=\rho_0$ versus $\rho=\rho_1$ that uses
one-way communication of less than
\[
\frac{1}{(\rho_0-\rho_1)^2}\left(\sqrt{(1-\rho_1^2)\log\frac 1
  \ep}+\sqrt{(1-\rho_0^2) \log \frac 1 \delta}\right)^2
+O\left(
\sqrt{\log \frac 1 \ep + \log \ln \frac
  1 \delta}\sqrt{\log \frac 1 \delta}
\right)
 \text{ bits}.
\]
\end{theorem}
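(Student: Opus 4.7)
The plan is to adapt the one-way protocol and analysis of Theorem~\ref{t:one-sided} to the simple binary hypothesis pair $\cH_0:\rho=\rho_0$ versus $\cH_1:\rho=\rho_1$. The protocol is identical: parties share an $n\times 2^k$ matrix $U$ of i.i.d.\ uniform $\pm 1$ entries, $\cP_1$ transmits the least index $j$ for which $U_j\cdot X\ge r\sqrt{n}$, and $\cP_2$ declares $\cH_0$ iff $U_j\cdot Y\ge \theta r\sqrt{n}$. The only structural change is that $\theta$ must now lie strictly inside $(\rho_1,\rho_0)$, rather than below $\tau$, since under $\cH_1$ the variables $Y_t$ still carry positive correlation $\rho_1>0$ with $X_t$ and a non-trivial threshold is needed for separation.

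I would first redo the Berry-Esseen computation of Theorem~\ref{t:one-sided} essentially verbatim. Conditioned on $U=u$ and $X=x$, the random variable $U_j\cdot Y$ is a sum of $n$ independent Gaussians with total mean $\rho(u_j\cdot x)$ and total variance $n(1-\rho^2)$, so on the event $u_j\cdot x\ge r\sqrt{n}$ we get
\begin{align*}
\bPPr{\cH_0}{U_j\cdot Y\ge\theta r\sqrt{n}\mid U=u, X=x}\ge Q\!\left(\frac{r(\theta-\rho_0)}{\sqrt{1-\rho_0^2}}\right),
\end{align*}
and an analogous upper bound with $\rho_1$ in place of $\rho_0$ under $\cH_1$. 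Pushing these through the union bound over the $2^k$ columns just as in the proof of Theorem~\ref{t:one-sided} gives the two error-probability constraints reproduced in the excerpt just before the theorem statement. The Chernoff bound $Q(x)\le e^{-x^2/2}$ then collapses them into
\begin{align*}
\frac{r^2}{2\ln 2}\ge\max\!\left\{\frac{b}{(\rho_0-\theta)^2},\ \frac{a}{(\theta-\rho_1)^2}\right\},
\end{align*}
where $a=(1-\rho_1^2)\bigl(\log\tfrac1\ep+\log\ln\tfrac3\delta+1\bigr)$ and $b=(1-\rho_0^2)\log\tfrac3\delta$.

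Minimizing the right-hand side over $\theta\in(\rho_1,\rho_0)$ is a one-variable convex problem whose optimum $\theta^\ast=(\rho_0\sqrt a+\rho_1\sqrt b)/(\sqrt a+\sqrt b)$ equalizes the two terms and produces the minimum value $(\sqrt a+\sqrt b)^2/(\rho_0-\rho_1)^2$, which is exactly the $r^2/(2\ln 2)$ stated in the excerpt. The communication length is then $k=\lceil\log(1/Q(r))+\log\ln(3/\delta)\rceil\le r^2/(2\ln 2)+O(\log r)+O(1)$, using the standard Gaussian tail lower bound $Q(r)\ge\tfrac{r}{\sqrt{2\pi}(r^2+1)}e^{-r^2/2}$. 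The only step requiring care is the final bookkeeping: expanding $(\sqrt a+\sqrt b)^2=a+b+2\sqrt{ab}$ and isolating the leading pieces $(1-\rho_1^2)\log\tfrac1\ep$ and $(1-\rho_0^2)\log\tfrac1\delta$ from their $\log\ln(3/\delta)+1$ and $\log 3$ companions, one checks via $\sqrt{u+v}\le\sqrt u+\sqrt v$ that every residual contribution, including the $\log r$ term, is bounded by $O\bigl(\sqrt{(\log\tfrac1\ep+\log\ln\tfrac1\delta)\log\tfrac1\delta}\bigr)$, matching the stated additive error.
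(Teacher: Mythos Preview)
Your approach is the paper's: run the same protocol as Theorem~\ref{t:one-sided} with $\theta\in(\rho_1,\rho_0)$, obtain the two error constraints displayed just before the theorem, optimize over $\theta$, and bound $k$ via $Q(r)\ge\frac{r}{\sqrt{2\pi}(r^2+1)}e^{-r^2/2}$. Your optimization over $\theta$ and the residual bookkeeping are correct and match the paper's stated $r^2$.

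One step does not go through as written. The ``analogous upper bound with $\rho_1$ in place of $\rho_0$ under $\cH_1$'' on the event $u_j\cdot x\ge r\sqrt{n}$ is false: the conditional probability equals $Q\bigl((\theta r-\rho_1\,u_j\cdot x/\sqrt{n})/\sqrt{1-\rho_1^2}\bigr)$, which is \emph{increasing} in $u_j\cdot x$, so the value at $u_j\cdot x=r\sqrt{n}$ is a lower bound, not an upper bound. Relatedly, the $\cH_1$ analysis in Theorem~\ref{t:one-sided} factored via the independence of $X$ and $Y$ under $\rho=0$, which is no longer available when $\rho_1>0$, so ``just as in the proof of Theorem~\ref{t:one-sided}'' does not apply either. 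The fix is to bound the bivariate Gaussian tail $\bPr{A\ge r,\ \rho_1 A+\sqrt{1-\rho_1^2}\,B\ge\theta r}$ directly, with $A,B$ independent standard normals: completing the square in the joint exponent of $\phi(a)\exp\bigl(-(\theta r-\rho_1 a)^2/(2(1-\rho_1^2))\bigr)$ gives $\exp\bigl(-\theta^2r^2/2-(a-\rho_1\theta r)^2/(2(1-\rho_1^2))\bigr)$, and integrating over $a\ge r$ yields a bound whose exponent matches that of $Q(r)\,Q\bigl(r(\theta-\rho_1)/\sqrt{1-\rho_1^2}\bigr)$; the polynomial-in-$r$ prefactor only shifts $a$ by $O(\log r)$ and is absorbed into the stated $O(\cdot)$ term. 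The paper's own discussion simply asserts the constraint without derivation, so with this patch your argument follows the intended line and is in fact more complete.
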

We note that~\cite{hadar2019error} derived an upper bound for the
error-exponent for this problem when communication length per sample is fixed.
While the result
there  was stated for error exponent, the main
bound~\cite[Equation (48)]{hadar2019error} shows that
the one-way communication needed for testing $\rho=\rho_0$
versus $\rho=\rho_1$ must exceed
\[
\left(\frac{(1-\rho_1)^2}{(\rho_0-\rho_1)^2} -1
\right)\left( \max\left\{(1-\delta)\log \frac 1 \ep,
(1-\ep)\log \frac 1 \delta\right\}-1\right),
\]
which almost matches the communication requirement for our scheme.
However, we do not
account for the number of
samples in our scheme, and it may not attain the upper bound on the error
exponent in~\cite{hadar2019error}. 

We close with the remark that extending our results to the case when
the basis used to express $\bEE{Y_1|X_1=x_1}$ as a linear function of
$x_1$ is unknown is practically relevant and an interesting open
problem. It will account for testing for correlation in some
unknown ``feature representation'', and not along fixed features as in our
current setting.  

\section*{Acknowledgements}
The authors thank Ofer Shayevitz and Uri Hadar for useful discussions and Prathamesh Mayekar for suggesting the use of median trick in the proof of upper bound for general $d$. 
\bibliography{IEEEabrv,references} \bibliographystyle{IEEEtran}

\begin{thebibliography}{10}
\providecommand{\url}[1]{#1}
\csname url@samestyle\endcsname
\providecommand{\newblock}{\relax}
\providecommand{\bibinfo}[2]{#2}
\providecommand{\BIBentrySTDinterwordspacing}{\spaceskip=0pt\relax}
\providecommand{\BIBentryALTinterwordstretchfactor}{4}
\providecommand{\BIBentryALTinterwordspacing}{\spaceskip=\fontdimen2\font plus
\BIBentryALTinterwordstretchfactor\fontdimen3\font minus
  \fontdimen4\font\relax}
\providecommand{\BIBforeignlanguage}[2]{{%
\expandafter\ifx\csname l@#1\endcsname\relax
\typeout{** WARNING: IEEEtran.bst: No hyphenation pattern has been}%
\typeout{** loaded for the language `#1'. Using the pattern for}%
\typeout{** the default language instead.}%
\else
\language=\csname l@#1\endcsname
\fi
#2}}
\providecommand{\BIBdecl}{\relax}
\BIBdecl

\bibitem{krs-ht-isit18}
K.~R. Sahasranand and H.~Tyagi, ``Extra samples can reduce the communication
  for independence testing,'' in \emph{2018 IEEE International Symposium on
  Information Theory (ISIT)}.\hskip 1em plus 0.5em minus 0.4em\relax IEEE,
  2018, pp. 2316--2320.

\bibitem{AhlCsi86}
R.~Ahlswede and I.~Csisz{\'a}r, ``Hypothesis testing with communication
  constraints,'' \emph{{IEEE} Trans. Inf. Theory}, vol.~32, no.~4, pp.
  533--542, July 1986.

\bibitem{Han87}
T.S.Han, ``Hypothesis testing with multiterminal data compression,'' vol.~33,
  no.~6, pp. 759--772, November 1987.

\bibitem{HanAmari98}
T.~S. Han and S.~Amari, ``Statistical inference under multiterminal data
  compression,'' \emph{{IEEE} Trans. Inf. Theory}, vol.~44, no.~6, pp.
  2300--2324, October 1998.

\bibitem{ShalabyPapamarcou92}
H.~M.~H. Shalaby and A.~Papamarcou, ``Multiterminal detection with zero-rate
  data compression,'' \emph{{IEEE} Trans. Inf. Theory}, vol.~38, no.~2, pp.
  254--267, March 1992.

\bibitem{XiangKimAllerton12}
Y.~Xiang and Y.~H. Kim, ``Interactive hypothesis testing with communication
  constraints,'' in \emph{50th Annual Allerton Conference on Communication,
  Control, and Computing}, October 2012, pp. 1065--1072.

\bibitem{XiangKimISIT13}
------, ``Interactive hypothesis testing against independence,'' in \emph{2013
  IEEE International Symposium on Information Theory}, July 2013, pp.
  2840--2844.

\bibitem{ZhaoLai14}
W.~Zhao and L.~Lai, ``Distributed testing against independence with multiple
  terminals,'' in \emph{52nd Annual Allerton Conference on Communication,
  Control, and Computing}, September 2014, pp. 1246--1251.

\bibitem{WiggerTimo16}
M.~Wigger and R.~Timo, ``Testing against independence with multiple decision
  centers,'' in \emph{International Conference on Signal Processing and
  Communications (SPCOM)}, June 2016, pp. 1--5.

\bibitem{katz2016collaborative}
G.~Katz, P.~Piantanida, and M.~Debbah, ``Collaborative distributed hypothesis
  testing with general hypotheses,'' in \emph{2016 IEEE International Symposium
  on Information Theory (ISIT)}.\hskip 1em plus 0.5em minus 0.4em\relax IEEE,
  2016, pp. 1705--1709.

\bibitem{sreekumar2019distributed}
S.~Sreekumar and D.~G{\"u}nd{\"u}z, ``Distributed hypothesis testing over
  discrete memoryless channels,'' \emph{IEEE Transactions on Information
  Theory}, 2019.

\bibitem{rahman2012optimality}
M.~S. Rahman and A.~B. Wagner, ``On the optimality of binning for distributed
  hypothesis testing,'' \emph{IEEE Transactions on Information Theory},
  vol.~58, no.~10, pp. 6282--6303, 2012.

\bibitem{hadar2019error}
U.~Hadar, J.~Liu, Y.~Polyanskiy, and O.~Shayevitz, ``Error exponents in
  distributed hypothesis testing of correlations,'' in \emph{2019 IEEE
  International Symposium on Information Theory (ISIT)}.\hskip 1em plus 0.5em
  minus 0.4em\relax IEEE, 2019, pp. 2674--2678.

\bibitem{hadar2019communication}
------, ``Communication complexity of estimating correlations,'' in
  \emph{Proceedings of the 51st Annual ACM SIGACT Symposium on Theory of
  Computing}, 2019, pp. 792--803.

\bibitem{andoni2019}
A.~Andoni, T.~Malkin, and N.~S. Nosatzki, ``Two party distribution testing:
  Communication and security,'' in \emph{46th International Colloquium on
  Automata, Languages, and Programming (ICALP 2019)}.\hskip 1em plus 0.5em
  minus 0.4em\relax Schloss Dagstuhl-Leibniz-Zentrum fuer Informatik, 2019.

\bibitem{dagan2018detecting}
Y.~Dagan and O.~Shamir, ``Detecting correlations with little memory and
  communication,'' in \emph{Conference On Learning Theory}, 2018, pp.
  1145--1198.

\bibitem{acharya2019inferenceI}
J.~Acharya, C.~L. Canonne, and H.~Tyagi, ``Inference under information
  constraints: Lower bounds from chi-square contraction,'' \emph{Proceedings of
  Machine Learning Research vol}, vol.~99, pp. 1--15, 2019.

\bibitem{acharya2019inferenceII}
------, ``Inference under information constraints ii: Communication constraints
  and shared randomness,'' \emph{arXiv preprint arXiv:1905.08302}, 2019.

\bibitem{pmlr-v89-acharya19b}
J.~Acharya, C.~Canonne, C.~Freitag, and H.~Tyagi, ``Test without trust: Optimal
  locally private distribution testing,'' in \emph{Proceedings of Machine
  Learning Research}, ser. Proceedings of Machine Learning Research,
  K.~Chaudhuri and M.~Sugiyama, Eds., vol.~89.\hskip 1em plus 0.5em minus
  0.4em\relax PMLR, 16--18 Apr 2019, pp. 2067--2076.

\bibitem{acharya2019domain}
J.~Acharya, C.~L. Canonne, Y.~Han, Z.~Sun, and H.~Tyagi, ``Domain compression
  and its application to randomness-optimal distributed goodness-of-fit,''
  \emph{arXiv preprint arXiv:1907.08743}, 2019.

\bibitem{hadar2019distributed}
U.~Hadar and O.~Shayevitz, ``Distributed estimation of gaussian correlations,''
  \emph{IEEE Transactions on Information Theory}, vol.~65, no.~9, pp.
  5323--5338, 2019.

\bibitem{Bonami70}
A.~Bonami, ``Etudes des coefficients {F}ourier des fonctiones de
  ${L}^p({G})$,'' \emph{Ann. Inst. Fourier}, vol.~20, no.~2, pp. 335--402,
  1970.

\bibitem{Gross75}
L.~Gross, ``Logarithmic sobolev inequalities,'' \emph{American Journal of
  Mathematics}, vol.~97, no.~4, pp. 1061--1083, 1975.

\bibitem{Beckner75}
W.~Beckner, ``Inequalities in {F}ourier analysis,'' \emph{Ann. of Math.}, vol.
  102, no.~1, pp. 159--182, July 1975.

\bibitem{Borell82}
C.~Borell, ``Positivity improving operators and hypercontractivity,''
  \emph{Mathematische Zeitschrift}, no. 180, pp. 225--234, 1982.

\bibitem{AhlswedeGacs76}
R.~Ahlswede and P.~Gacs, ``Spreading of sets in product spaces and
  hypercontraction of the markov operator,'' \emph{Ann. Probab.}, vol.~4,
  no.~6, pp. 925--939, December 1976.

\bibitem{Mossel13}
E.~Mossel, K.~Oleszkiewicz, and A.~Sen, ``On reverse hypercontractivity,''
  \emph{Geometric and Functional Analysis}, vol.~23, no.~3, pp. 1062--1097,
  June 2013.

\bibitem{GVJR2016}
V.~Guruswami and J.~Radhakrishnan, ``Tight bounds for communication-assisted
  agreement distillation,'' in \emph{Proceedings of the 31st Conference on
  Computational Complexity}, 2016, pp. 6:1--6:17.

\bibitem{AhlCsi98}
R.~Ahlswede and I.~Csisz{\'a}r, ``Common randomness in information theory and
  cryptography--part {II}: {CR} capacity,'' \emph{{IEEE} Trans. Inf. Theory},
  vol.~44, no.~1, pp. 225--240, January 1998.

\bibitem{TyaWat14}
H.~Tyagi and S.~Watanabe, ``A bound for multiparty secret key agreement and
  implications for a problem of secure computing,'' in \emph{EUROCRYPT}, 2014,
  pp. 369--386.

\bibitem{TyaWat14ii}
------, ``Converses for secret key agreement and secure computing,''
  \emph{{IEEE} Trans. Inf. Theory}, vol.~61, no.~9, pp. 4809--4827, September
  2015.

\bibitem{AcharyaDasKamath15}
J.~Acharya, C.~Daskalakis, and G.~Kamath, ``Optimal testing for properties of
  distributions,'' in \emph{Advances in Neural Information Processing Systems
  28}.\hskip 1em plus 0.5em minus 0.4em\relax Curran Associates, Inc., 2015,
  pp. 3591--3599.

\bibitem{LiuCuffVerdu17}
J.~Liu, P.~Cuff, and S.~Verd\'{u}, ``Secret key generation with limited
  interaction,'' \emph{{IEEE} Trans. Inf. Theory}, vol.~63, no.~11, pp.
  7358--7381, November 2017.

\bibitem{durrett2019probability}
R.~Durrett, \emph{Probability: theory and examples}.\hskip 1em plus 0.5em minus
  0.4em\relax Cambridge university press, 2019, vol.~49.

\bibitem{nair2014equivalent}
C.~Nair, ``Equivalent formulations of hypercontractivity using information
  measures,'' \emph{Proceedings of International Z{\"u}rich Seminar on
  Communications}, 2014.

\bibitem{CsiKor11}
I.~Csisz{\'a}r and J.~K{\"o}rner, \emph{Information theory: {C}oding theorems
  for discrete memoryless channels. 2nd edition}.\hskip 1em plus 0.5em minus
  0.4em\relax Cambridge University Press, 2011.

\end{thebibliography}


\end{document}